\newtheorem{theorem}{Theorem}
\newtheorem{lemma}{Lemma}
\newtheorem{definition}{Definition}
\algrenewcommand{\algorithmicrequire}{\textbf{Input:}}
\algrenewcommand{\algorithmicensure}{\textbf{Output:}}
\tikzstyle{overbrace text style}=[font=\tiny, above, pos=.5, yshift=5pt]
\tikzstyle{overbrace style}=[decorate,decoration={brace,raise=5pt,amplitude=3pt}]
\newcommand{\RR}{\mathbb{R}}
\newcommand{\xx}{\mathbf{x}}
\newcommand{\TT}{\mathcal{T}}
\newcommand{\CCC}{\mathcal{C}}
\newcommand{\CSH}{{\sc ConHalving}}
\newcommand{\GC}{{\sc Gcircuit}}
\newcommand{\tucker}{{\sc Tucker}}
\newtheorem{problem}{\textbf{Problem}}
			\title{Hardness Results for Consensus-Halving}
			\author[1]{Aris Filos-Ratsikas}
			\author[2]{S{\o}ren Kristoffer Stiil Frederiksen\\}
			\author[3]{Paul W. Goldberg}
			\author[4]{Jie Zhang}
			\affil[1]{\'{E}cole Polytechnique F\'{e}d\'{e}rale de Lausanne, Switzerland\\
				\texttt{aris.filosratsikas@epfl.ch}}
			\affil[2]{Aarhus University, Denmark\\
				\texttt{sorensf@gmail.com}}
			\affil[3]{University of Oxford, United Kingdom\\
				\texttt{paul.goldberg@cs.ox.ac.uk}}
			\affil[4]{University of Southampton, United Kingdom\\
				\texttt{jie.zhang@soton.ac.uk}}
	\date{}		
\begin{document}


\maketitle

\begin{abstract}
\noindent The \emph{Consensus-halving} problem is the problem of dividing an object into two portions, such that each of $n$ agents has equal valuation for the two portions. We study the $\epsilon$-approximate version, which allows each agent to have an $\epsilon$ discrepancy on the values of the portions. It was recently proven in \cite{filos2018consensus} that the problem of computing an $\epsilon$-approximate consensus-halving solution (for $n$ agents and $n$ cuts)  is PPA-complete when $\epsilon$ is \emph{inverse-exponential}. In this paper, we prove that when $\epsilon$ is \emph{constant}, the  problem is PPAD-hard and the problem remains PPAD-hard when we allow a constant number of additional cuts. Additionally, we prove that deciding whether a solution with $n-1$ cuts exists for the problem is NP-hard. 
\end{abstract}

\section{Introduction}

Suppose that two families want to split a piece of land into two regions such that every member of each family believes that the land is equally divided, or suppose that a conference organizer wants to assign the conference presentations to the morning and the afternoon sessions, so that every participant thinks that the two sessions are equally interesting. Is it possible to achieve these objectives? If yes, how can it be done and how efficiently? What if we aim for ``almost equal'' instead of ``equal''?

These real-life problems can be modeled as the \emph{Consensus-halving problem} 
\cite{simmons2003consensus}. More formally, there are $n$ agents and an object to be divided; each agent may have a different opinion as to which part of the object is more valuable. The problem is to divide the object into  two portions such that each of the $n$ agents believes that the two portions have equal value, according to her personal opinion. The division may need to cut the object into pieces and then label each piece appropriately to include it in one of the two portions.

The importance of the Consensus-halving problem - or to be precise, of its approximate version, where there is an associated precision parameter $\epsilon$ - other than the fact that it models real-life problems like the ones described above, lies in in the following fact: It is the first ``natural'' problem that is \emph{complete} for the complexity class PPA, where ``natural'' here means that its does not contain a circuit explicitly in its definition; this was proven quite recently by Filos-Ratsikas and Goldberg \cite{filos2018consensus}. PPA is a class of \emph{total search problems} \cite{megiddo1991total} defined in \cite{papadimitriou1994complexity}, and is a superclass of the class PPAD, which precisely captures the complexity of computing a Nash equilibrium \cite{daskalakis2009complexity,chen2009settling}. Therefore, generally speaking, a PPA-hardness result is stronger than a PPAD-hardness result.

Crucially however, the hardness result in \cite{filos2018consensus} requires the precision parameter to be inverse-exponential in the number of agents and does not even provably preclude the possibility of efficient algorithms, if we allow larger discrepancies in the values for the two portions. In this paper, we prove that this is actually not possible\footnote{Under usual computational complexity assumptions, here that PPAD-hard problems do not admit polynomial-time algorithms.}, by showing that even when the allowed discrepancy is independent of the number of agents, the problem is PPAD-hard. Understanding the problem for increasing values of the discrepancy parameter is quite important in terms of capturing precisely its complexity and resembles closely the series of results establishing hardness of computing a mixed $\epsilon$-Nash equilibrium, from $\epsilon$ being inverse-polynomial in \cite{daskalakis2009complexity,chen2009settling} to being constant in \cite{rubinstein2015inapproximability}, as well as several other problems (see \cite{rubinstein2015inapproximability}). Additionally, one could imagine that solutions where constant discrepancies are acceptable are the ones arising in several real-life scenarios, such as splitting land.

\subsection{Our results}

We are interested in the computational complexity of computing an $\epsilon$-approximate solution to the Consensus-halving problem where $\epsilon$ is a constant function of the number of agents, as well as the complexity of deciding whether given an input instance, $n-1$ cuts are sufficient to achieve an $\epsilon$-approximate solution. We discuss our main results below.
\medskip
\begin{itemize}
\item We prove that the problem of finding an $\epsilon$-approximate solution to the Consensus-halving problem for $n$ agents using $n$ cuts is PPAD-hard. Moreover, the problem remains PPAD-hard even if we allow a constant number of additional cuts. The result is established via a reduction from the approximate Generalized Circuit problem~\cite{chen2006settling,daskalakis2009complexity,rubinstein2015inapproximability}. 
\medskip
\item We prove that it is NP-hard to decide whether or not an $\epsilon$-approximate solution to the Consensus-halving problem for $n$ agents using $n-1$ cuts exists. We establish the result via a reduction from {\sc 3SAT}. 
\medskip
\item We prove that the problem of finding an $\epsilon$-approximate solution to the Consensus-halving problem for $n$ agents using $n$ cuts is in the computational class PPA; we obtain the result via a reduction to the computational version of Tucker's Lemma \cite{papadimitriou1994complexity,aisenberg20152}. 
\end{itemize}

\noindent We remark here that an earlier version of this paper actually predated  \cite{filos2018consensus} and some of the results in \cite{filos2018consensus} are established by referencing the results in the present paper. Specifically:
\medskip
\begin{itemize}
	\item While the authors of \cite{filos2018consensus} provide a rather elaborate reduction to establish PPA-hardness of the problem, the containment in the class PPA is established with reference to the present paper. In turn, the containment result follows from a formalization of the ideas of the algorithms by \cite{simmons2003consensus} and \cite{prescott2005constructive} and Fan's version of Tucker's Lemma \cite{fan1967simplicial,tucker1945some}.
	\medskip
	\item In \cite{filos2018consensus}, the authors obtain a \emph{computational equivalence} between the \emph{Necklace Splitting problem} \cite{alon1987splitting} and $\epsilon$-Consensus Halving, for $\epsilon$ being at least inverse-polynomial. The inverse-polynomial dependence on $\epsilon$ implies that PPA-hardness of the former problem does not follow from their hardness result, but PPAD-hardness does follow from their reduction and our main result here.
\end{itemize}


\subsection{Related work} 

The Consensus-Halving problem was explicitly formalized and studied firstly by Simmons and Su \cite{simmons2003consensus}, who proved that a solution with $n$ cuts always exists
and constructed a protocol that finds an approximate solution, which allows for a small discrepancy on the values of the two portions. Their proofs are based on one of the most applied theorems in topology, the Borsuk-Ulam Theorem \cite{borsuk1933drei} and its combinatorial analogue, known as Tucker's Lemma \cite{tucker1945some}. The existence of solutions to the problem was already known since \cite{goldberg1985bisection,alon1986borsuk,barbanel1996super} but the algorithm in \cite{simmons2003consensus} is constructive, in the sense that it actually finds such a solution and furthermore, it does not require the valuations of the players to be additively separable over subintervals, like some of the previous papers do. Actually, for the case of valuations which are probability measures, the existence of a solution with $n$ cuts was known since as early as the 1940s \cite{neyman1946theoreme} and can also be obtained as an application of the Hobby-Rice Theorem \cite{hobby1965moment} (also see \cite{alon1987splitting}). Despite proposing an explicit protocol however, the authors in \cite{simmons2003consensus} do not answer the question of ``efficiency'', i.e. how fast can a protocol find an (approximate) solution and the running time of their protocol is worst-case exponential-time.\footnote{The protocol exhaustively iterates through all the vertices of triangulated geometric object, which, to achieve a small discrepancy, has to be exponentially large.} 

To this end, Filos-Ratsikas and Golberg \cite{filos2018consensus} recently proved that the problem is PPA-complete, but as we explained in the introduction, the hardness holds only when the precision parameter is exponentially-small and therefore does not subsume our results. The computational classes PPA (Polynomial Parity Arguments) and PPAD (Polynomial Parity Arguments on Directed graphs) were introduced by Papadimitriou \cite{papadimitriou1994complexity} in an attempt to capture the precise complexity of several interesting problems of a topological nature  such as computational analogues of Sperner's Lemma \cite{sperner1928neuer} and Brouwer's and Kakutani's fixed point theorems \cite{border1989fixed}, which are all known to be in PPAD \cite{papadimitriou1994complexity}. Interestingly, Aisenberg et al. \cite{aisenberg20152} recently proved that the search problems associated with the Borsuk-Ulam Theorem and Tucker's Lemma are PPA-complete; this is the starting point for the reduction in \cite{filos2018consensus}, but it will also be used for our ``in-PPA'' result, which complements the hardness result of \cite{filos2018consensus}.

Our PPAD-hardness reduction goes via the {\em Generalized Circuit} problem.  Generalized circuits differ from usual circuits in the sense that they can contain cycles, a fact which basically induces a continuous function on the gates, and the solution is guaranteed to exist by Brouwer's fixed point theorem. The $\epsilon$-approximate Generalized Circuit problem was implicitly proven to be PPAD-complete for exponentially small $\epsilon$ in \cite{daskalakis2009complexity} and explicitly for polynomial small $\epsilon$ \cite{chen2006settling}, en route to proving that perhaps the most interesting problem in PPAD, that of computing a mixed-Nash equilibrium, is also complete for the class. The same problem was also used in \cite{chen2013complexity} to prove that finding an approximate market equilibrium for the Arrow-Debreu market with linear and non-monotone utilities is PPAD-complete and in \cite{othman2014complexity} to prove that finding an approximate solution of the Competitive Equilibrium with Equal Incomes (CEEI) for indivisible items is PPAD-complete. More recently,  Rubinstein \cite{rubinstein2015inapproximability} showed that computing an $\epsilon$-approximate solution for the Generalized Circuit problem is PPAD-complete for a constant $\epsilon$, which implies that finding an $\epsilon$-approximate Nash equilibrium is PPAD-complete for constant $\epsilon$, in the context of graphical games; we reduce from that version of the problem. This improvement should also lead to stronger hardness results in \cite{chen2013complexity} and \cite{othman2014complexity}, as well as other problems that rely on the Generalized Circuit problem. 

The Consensus-Halving problem is a typical fair division problem that studies how to divide a set of resources between a set of agents who have valuations on the resources, such that some fairness properties are fulfilled. The fair division literature, which dates back to the late 1940s \cite{steinhaus1948problem}, has studied a plethora of such problems, with \emph{chore-division} \cite{peterson2002four,gardner1978aha}, \emph{rent-partitioning} \cite{haake2002bidding,brams2001competitive,su1999rental} and the perhaps the most well-known one, \emph{cake-cutting} \cite{brams1996fair,robertson1998cake} being notable examples. Note that Consensus-halving is inherently different from (envy-free or proportional) cake-cutting, since the objective is that all participants are (approximately) equally satisfied with the solution, and they do not have ``ownership'' over the resulting pieces.

\section{Preliminaries}\label{sec:preliminaries}

We represent the object $O$ as a line segment $[0,1]$. Each agent in the set of agents $N=\{1,\ldots,n\}$ has its own valuation over any subset of interval $[0,1]$. These valuations are:
\smallskip
\begin{itemize}
	\item \emph{non-negative and bounded}, i.e. there exists $M>0$, such that for any subinterval $X \subseteq [0,1]$, it holds that  $0 \le u_i(X) \le M$.
	\item \emph{non-atomic}, i.e. agents have no value on any single point on the interval.
\end{itemize}  
\smallskip
For simplicity, the reader may assume that the valuations are represented as step functions (where agents have constant values over distinct intervals, although this is not necessary for the results to hold.\footnote{The containment result actually holds for more general functions, while our hardness results (PPAD-hardness and NP-hardness) hold even for well-behaved functions, such as step functions. We note here that while an exact solution to Consensus-Halving generally requires the valuations to be continuous, this is not necessary for the existence of an approximate solution; the algorithm of \cite{simmons2003consensus} can find such a solution assuming that valuations are bounded and non-atomic.} 
A set of $k$ cuts $\{t_1,\ldots,t_k\}$, where $0 \le t_1 \le \ldots \le t_k \le 1$, means that we cut along the points $t_1, \ldots, t_k$, such that the object is cut into $k+1$ pieces $X_i=[t_{i-1}, t_{i}]$ for $i=1,\ldots,k+1$, where $t_0=0$ and $t_{k+1}=1$. A labelling of an interval $X_i$ means that we assign a label $\ell \in \{+,-\}$ to $X_i$, which corresponds to including $X_i$ in a set of intervals, either $O_{+}$ or $O_{-}$. In case some cuts happen to be on the same point, say $t_{j-1}=t_{j}$, then the corresponding subinterval $X_j$ is a single point on which agents have no value. We will consider cuts on the same points to be the same cut, e.g. if there is only one such occurrence, we will consider the number of cuts to be $k-1$.

The Consensus-halving problem is to divide the object $O$ into two portions $O_{+}$ and $O_{-}$, such that every agent derives equal valuation from the two portions, i.e., $u_i(O_{+})=u_i(O_{-}), \forall i\in N$. The $\epsilon$-approximate Consensus-halving problem allows that each agent has a small discrepancy on the values of the two partitions, i.e., $|u_i(O_{+})-u_i(O_{-})|<\epsilon$. As discussed in the Introduction, such as solution always exists. \cite{simmons2003consensus}.

\medskip

\noindent We define the following search problem, called $(n,k,\epsilon)$-\CSH. 


\begin{problem}{$(n,k,\epsilon)$-\CSH}
	
	\textbf{Input:} The value density functions (valuation functions) $v_i : O -> R_+, i=1,\cdots,n$.
	
	\textbf{Output:} A partition $(O_{+},O_{-})$ with $k$ cuts such that $|u_i(O_{+})-u_i(O_{-})|\leq\epsilon$.
\end{problem}	
We will also consider the following decision problem, called $(n,n-1,\epsilon)$-\CSH. Note that for $n$ agents and $n-1$ cuts, a solution to $\epsilon$-approximate Consensus-halving problem is not guaranteed to exist.

\begin{problem}{$(n,n-1,\epsilon)$-\CSH}
	
	\textbf{Input:} The value density functions $v_i : O -> R_+, i=1,\cdots,n$.
	
	\textbf{Output:} {\sc Yes}, if a partition $(O_{+},O_{-})$ with $n-1$ cuts such that $|u_i(O_{+})-u_i(O_{-})|\leq\epsilon$ for all agents $i \in N$ exists, and {\sc No} otherwise.
\end{problem}	


\noindent \emph{TFNP, PPA and PPAD:} Most of the problems that we will consider in this paper belong to the class of \emph{total search problems}, i.e. search problems for which a solution is guaranteed to exist, regardless of the input. In particular, we will be interested in problems in the class TFNP, i.e. total search problems for which a solution is verifiable in polynomial time \cite{megiddo1991total}. 
An important subclass of TFNP is the class PPAD, defined by Papadimitriou in \cite{papadimitriou1994complexity}. PPAD stands for ``Polynomial Parity Argument on a Directed graph'' and is defined formally in terms of the problem {\sc End-Of-Line} \cite{papadimitriou1994complexity}. 
The class PPAD is defined in terms of an exponentially large digraph $G=(V,E)$ consisting of $2^n$ vertices with indegree and outdegree at most $1$. An edge between vertices $v_1$ and $v_2$ is present in $E$ if and only if the successor $S(v_1)$ of node $v_1$ is $v_2$ and the predecessor $P(v_2)$ of node $v_2$ is $v_1$. By construction, the point $0^n$ has indegree $0$ and we are looking for a point with outdegree $0$, which is guaranteed to exist. Note that the graph is given \emph{implicitly} to the input, through a function that given any vertex $v$, returns its set of neighbours (predecessor and successor) in polynomial time. PPAD is a subclass of the class PPA (``Polynomial Parity Argument'') which is defined similarly, but in terms of an undirected graph in which every vertex has degree at most $2$, and given a vertex of degree $1$, we are asked to find another vertex of degree $1$; the computational problem associated with the class is called {\sc Leaf} \cite{papadimitriou1994complexity} and a problem is the class PPA if it is polynomial-time reducible to {\sc Leaf}.

For clarity of exposition, we postpone the formal definitions of {\sc End-Of-Line} and {\sc Leaf} until Section \ref{sec:inPPA}, where they are being used for the ``in-PPA'' result.

\subsection{Generalized Circuits}

A {\em generalized circuit} $S=(V, \TT)$ consists of a set of nodes $V$ and a set of gates $\TT$ and let $N=|V|$ and $M=|\TT|$. Every gate $T \in \TT$ is a 5-tuple $T =(G, v_{in_1}, v_{in_2}, v_{out}, \alpha)$ where
\begin{itemize}
\item $G\in \{ G_{\zeta}, G_{\times \zeta}, G_{=}, G_{+}, G_{-}, G_{<}, G_{\vee}, G_{\wedge},  G_{\neg} \}$ is the type of the gate, 
\item $v_{in_1}, v_{in_2} \in V \cup \{nil\}$ are the first and second input nodes of the gate or $nil$ if not applicable,
\item $v_{out}\in V$ is the output node, and $\alpha \in [0,1] \cup \{nil\}$ is a parameter if applicable,
\item for any two gates $T =(G,v_{in_1}, v_{in_2},v_{out},\alpha)$ and $T'=(G',v_{in_1}', v_{in_2}',v_{out}',\alpha')$ in $\TT$ where $T \neq T'$, they must satisfy $v_{out} \neq v_{out}'$. 
\end{itemize} 
Note that generalized circuits extend the standard boolean or arithmetic circuits in the sense that generalized circuits allow cycles in the directed graph defined by the nodes and gates. We define the search problem $\epsilon$-\GC\ \cite{chen2006settling,rubinstein2015inapproximability}:

\begin{problem}{$\epsilon$-\GC}
	
	\textbf{Input:} A generalized circuit $S=(V,\TT)$. 
	
	\textbf{Output:} A vector $\mathbf{x} \in [0,1]^N$ of values for the nodes, satisfying the conditions from Table \ref{table:gcircuit}.
\end{problem}	
Note that a solution to $\epsilon$-\GC\ always exists \cite{chen2006settling} and hence the problem is well-defined. Also, notice that for the logic gates $G_{\vee},G_{\wedge}$ and $G_{\neg}$, if the input conditions are not fulfilled, the output is unconstrained, and for the multiplication gate, it holds that $\alpha \in (0,1]$. $\epsilon$-\GC\ was proven to be PPAD-complete implicitly or explicitly in \cite{daskalakis2009complexity,chen2006settling} for inversely polynomial error $\epsilon$ and in \cite{rubinstein2015inapproximability} for constant $\epsilon$. We state the latter theorem here as a lemma:

\begin{small}
\begin{table} [!htbp]
\centering 
\caption{Gate constraint $T=(G,v_{in_1}, v_{in_2},v_{out},\alpha)$}
\label{table:gcircuit}
\begin{tabular}{| c | c |} 
\hline
Gate & Constraint \\
\hline
$(G_{\zeta},nil,nil,v_{out},\alpha)$ & $\alpha - \epsilon \leq \xx[v_{out}] \leq \alpha + \epsilon$ \\
\hline
$(G_{\times \zeta},v_{in_1},nil,v_{out},\alpha)$ & $\alpha \cdot \xx[v_{in_1}]- \epsilon \leq \xx[v_{out}] \leq \alpha \cdot \xx[v_{in_1}] + \epsilon$ \\
\hline 
$(G_{=},v_{in_1},nil,v_{out},nil)$ & $\xx [v_{in_1}]- \epsilon \leq \xx[v_{out}] \leq \xx[v_{in_1}] + \epsilon$ \\
\hline
$(G_{+},v_{in_1}, v_{in_2},v_{out},nil)$ & $ \xx[v_{out}] \in \left[\min(\xx[v_{in_1}]+\xx[v_{in_2}], 1) -  \epsilon, 
																	\min(\xx[v_{in_1}]+\xx[v_{in_2}], 1) + \epsilon\right]$ \\
\hline
$(G_{-},v_{in_1}, v_{in_2},v_{out},nil)$ & $ \xx[v_{out}] \in \left[\max(\xx[v_{in_1}]-\xx[v_{in_2}], 0) -  \epsilon, 
																\max(\xx[v_{in_1}]-\xx[v_{in_2}], 0) + \epsilon\right]$ \\
\hline
$(G_{<},v_{in_1}, v_{in_2},v_{out},nil)$ & $\xx[v_{out}] = \left\{
	\begin{array}{ll}
		1 \pm \epsilon,  	\,\,\,\,\,\ & \hbox{if $\xx[v_{in_1}]<\xx[v_{in_2}] - \epsilon$;} \\
		0 \pm \epsilon, 	\,\,\,\,\,\ & \hbox{if $\xx[v_{in_1}]>\xx[v_{in_2}] + \epsilon$.}
	\end{array} \right.$ \\
\hline
$(G_{\vee},v_{in_1}, v_{in_2},v_{out},nil)$ & $\xx[v_{out}] = \left\{
	\begin{array}{ll}
		1 \pm \epsilon,  	\,\,\,\,\,\ & \hbox{if $\xx[v_{in_1}]=1 \pm \epsilon \,\,\ \text{or} 	\,\,\ \xx[v_{in_2}]=1 \pm \epsilon$;} \\
		0 \pm \epsilon, 	\,\,\,\,\,\ & \hbox{if $\xx[v_{in_1}]=0 \pm \epsilon \,\,\ \text{and} \,\,\ \xx[v_{in_2}]=0 \pm \epsilon$.}
	\end{array} \right.$ \\
\hline
$(G_{\wedge},v_{in_1}, v_{in_2},v_{out},nil)$ & $\xx[v_{out}] = \left\{
	\begin{array}{ll}
		1 \pm \epsilon,  	\,\,\,\,\,\ & \hbox{if $\xx[v_{in_1}]=1 \pm \epsilon \,\,\ \text{and} \,\,\ \xx[v_{in_2}]=1 \pm \epsilon$;} \\
		0 \pm \epsilon, 	\,\,\,\,\,\ & \hbox{if $\xx[v_{in_1}]=0 \pm \epsilon \,\,\ \text{or} 	\,\,\ \xx[v_{in_2}]=0 \pm \epsilon$.}
	\end{array} \right.$ \\
\hline
$(G_{\neg},v_{in_1},nil,v_{out},nil)$ & $\xx[v_{out}] = \left\{
	\begin{array}{ll}
		1 \pm \epsilon,  	\,\,\,\,\,\ & \hbox{if $\xx[v_{in_1}]=0 \pm \epsilon$;} \\
		0 \pm \epsilon, 	\,\,\,\,\,\ & \hbox{if $\xx[v_{in_1}]=1 \pm \epsilon$.}
	\end{array} \right.$ \\
\hline
\end{tabular}
\end{table}
\end{small}

\begin{lemma}[\cite{rubinstein2015inapproximability}]\label{lem:gcppad}
	There exists a constant $\epsilon >0$ such that $\epsilon$-\GC\ is PPAD-complete. 
\end{lemma}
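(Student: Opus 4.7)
The statement is cited as a black box from \cite{rubinstein2015inapproximability}, so a ``proof'' here really means sketching the approach Rubinstein takes and what I would try if tasked with reproving it from scratch.

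The containment side is the routine half. Given a generalized circuit $S=(V,\TT)$, I would first define a continuous self-map $F : [0,1]^N \to [0,1]^N$ in which the coordinate for each output node $v_{out}$ is set to the target value dictated by the corresponding gate (clipped into $[0,1]$), and all other coordinates are left fixed. Brouwer's theorem gives a fixed point, and by perturbing the discontinuous gates ($G_{<}, G_{\vee}, G_{\wedge}, G_{\neg}$) with thin linear ramps of width $\Theta(\epsilon)$, a fixed point of the smoothed map is an $\epsilon$-solution. Since Brouwer is in PPAD, this places $\epsilon$-\GC\ in PPAD, and solutions are clearly verifiable by checking each row of Table~\ref{table:gcircuit} locally, so $\epsilon$-\GC\ is in TFNP $\cap$ PPAD.

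The hard direction is PPAD-hardness for a \emph{constant} $\epsilon$. The plan is to bootstrap from the classical PPAD-hardness of $\epsilon$-\GC\ for $\epsilon = 1/\mathrm{poly}(N)$ \cite{chen2006settling,daskalakis2009complexity}. A direct gate-by-gate reduction cannot work, since constant slack per gate compounds along a circuit of depth $d$ into $\Theta(d)$ total slack. The key idea I would pursue, following the high-level strategy of \cite{rubinstein2015inapproximability}, is a \emph{robust} or \emph{error-correcting} reencoding of the circuit: each logical value is represented not by a single node but by a redundant block of nodes (e.g.\ a majority-type code or a locally testable/decodable block), and each gate of the source circuit is simulated by a gadget acting on blocks such that (i) a perfectly consistent block still certifies the correct logical value and (ii) even if every gate of the gadget is implemented with constant slack $\epsilon$, one can decode a correct logical value from the block whenever the block is close (in Hamming-like distance) to any valid codeword. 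This is exactly the PCP-of-Proximity flavor that lies at the heart of the Rubinstein construction.

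The main obstacle, and the step I would expect to spend most of the effort on, is designing the gadget so that local $\epsilon$-errors truly do \emph{not} amplify across the simulated circuit. Concretely, I would need (a) a low-degree, constant-fan-in encoding whose consistency can be enforced using only the gates of Table~\ref{table:gcircuit}, and (b) a ``soundness amplification'' argument, most naturally via a graph-based product (in the spirit of Dinur's gap amplification or expander-based repetition), that turns the inverse-polynomial gap from \cite{chen2006settling,daskalakis2009complexity} into a constant gap. The logic/comparator gates $G_{<}, G_{\vee}, G_{\wedge}, G_{\neg}$ are what make this possible, because they allow approximate thresholding that cleans up small numerical drift; exploiting them to discretize intermediate block values and then feed them into arithmetic gates $G_{+}, G_{-}, G_{\times\zeta}$ is where the argument becomes delicate. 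Once such a robust simulation is in hand, the reduction from the polynomial-precision version of $\epsilon$-\GC\ to the constant-precision version is immediate, yielding Lemma~\ref{lem:gcppad}.
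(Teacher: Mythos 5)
The paper does not prove this lemma; it is cited as a black box from \cite{rubinstein2015inapproximability}, so there is no in-paper proof to compare against. Your sketch of the containment direction (Brouwer on a smoothed gate map, hence PPAD) is the standard argument and correct, and your high-level description of the hardness direction — the observation that naive gate-by-gate composition accumulates $\Theta(d)$ error, the need for a redundant block encoding with constant-error local decodability, and the use of the threshold/logic gates to re-discretize values before arithmetic operations — is a fair description of the strategy Rubinstein actually employs. You are explicit that this is a sketch rather than a full reproof, which is the appropriate stance given that the present paper also treats the result as external; a genuine reproof would have to pin down the specific averaging/amplification gadget and verify that the feedback cycles in a generalized circuit (where ``depth'' is not even well-defined) do not reintroduce error accumulation, which is where the bulk of Rubinstein's technical work lies.
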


\vspace{0.1cm}

\section{$(n,n+k,\epsilon)$-\CSH\ is PPAD-hard}\label{sec:PPADhard}


In this section, we will prove that finding an approximate partition for Consensus-halving using $n$ cuts is PPAD-hard, even if the allowed discrepancy between the two portions is a constant. We describe the reduction from $\epsilon$-\GC\ that we will be using for the PPAD-hardness proof. Given an instance $S=(V, \TT)$ of $\epsilon$-\GC, we will construct an instance of $(n,n,\epsilon')$-\CSH\ with $n=2N$ agents, in which each node $v_i \in V$ of the circuit will correspond to two agents $var_i$ and $copy_i$ and where $\epsilon'$ will be defined later. As a matter of convenience in the reduction, we will assume that for every gate $(G,v_{in_1}, v_{in_2},v_{out},\alpha)$ in $\TT$, $v_{in_1}, v_{in_2}$ and $v_{out}$ are distinct. This does not affect the hardness of the problem as any $\epsilon$-generalized circuit can be converted to this form by use of at most $2N$ additional equality-gates and nodes, and also since an $(\epsilon/2)$-approximate solution to the converted problem can clearly be converted to a solution in the original circuit. 

For ease of notation we consider a \CSH\ instance on the interval $[0,6N]$. Let $d_i := 6(i-1)$; the two agents $var_i$ and $copy_i$ that we construct for every node $v_i$ have valuations %
\begin{ceqn}
\begin{align*}
var_i  &= \begin{cases}
border_i(t) + G^{\tau}(t)				,& \text{if } v_i \text{ is the output of }\tau\\
border_i(t) 	,& \text{otherwise} \end{cases}\\
copy_i  &= \begin{cases}
4				,& t \in [d_i+3,d_i +4] \cup [d_i+5,d_i +6]\\
1				,& t \in [d_i+1,d_i +2] \cup [d_i+4,d_i +5]\\
0				,& \text{otherwise} \end{cases}\\
\textrm{where }\ \ \ border_i  &= \begin{cases}
4	,& \text{if } t \in [d_i,d_i +1] \cup [d_i+2,d_i +3]\\
0 	,& \text{otherwise} \end{cases}
\end{align*}
\end{ceqn}
Since each node is the output of at most one gate, $var_i$ is well-defined. Note that apart from the valuation defined by the function $G^{\tau}$, agents $var_i$ and $copy_i$ only have valuations on the sub-interval $[d_i,d_{i+1}]$, i.e., the agents associated with node $v_1$ only have valuations on $[0,6]$, the agents associated with $v_2$ only on have valuations on $[6,12]$ and so on. Let $v_i^{-}:=[d_i +1,d_i+2]$ and the right and left endpoints respectively be $v_{i,\ell}^{-}$ and $v_{i,r}^{-}$, (and analogously for $v_i^{+}:=[d_i +3,d_i+4]$, $v_{i,\ell}^{+}$ and $v_{i,r}^{+}$). Now, we are ready to define the functions $G^\tau$ according to Table \ref{table:gate-agents}. Notice that because of the assumption that the two input nodes and the output node are distinct, the graphs of the functions are as in Table \ref{table:gate-agents}. Figure \ref{fig:circuit} demonstrates an example of a Consensus-halving instance corresponding to a small circuit. 

\begin{table}[!ht]
	\caption{Agent preferences from gate $\tau=(G,v_{in_1}, v_{in_2},v_{out},\alpha)$. For the gate $G_{\times \zeta}$, the figure depicts the case when $\alpha+\epsilon<1$.}
	\label{table:gate-agents} 
	\centering 
	\begin{tabular}{| >{\centering\arraybackslash}m{0.25in} | >{\arraybackslash}m{2.4in} | >{\centering\arraybackslash}m{1.87in} |} 
		\hline
		& $ \hspace{1.1in} G^{\tau}(t) $ & Picture \\
		\hline
		$\begin{scriptsize} G_{\zeta} \end{scriptsize}$ 
		& $\begin{scriptsize} \begin{aligned}
		\begin{cases}
		1								& \text{if } t \in [v_{out,\ell}^{-} + \alpha - \frac{1}{2} , v_{out,\ell}^{-} + \alpha + \frac{1}{2}] \\
		0								& \text{otherwise}
		\end{cases}
		\end{aligned} \end{scriptsize}$
		&  \begin{tikzpicture}
	\node (up) at (0pt,20pt) {};
	\node (a_1) at (0pt,0pt) {}; 
	\node (a_2) at (130pt, 0pt) {};
	\draw (a_1)--(a_2);
	\draw[fill=lightgray] (50pt,0pt) rectangle (71pt,10pt);
	
	\draw [
    thick,
    decoration={
        brace,
        mirror,
        raise=5pt
    },
    decorate
] (57pt,0pt) -- (78pt,0pt)
node [pos=0.5,anchor=north,yshift=-5pt] {$v_{out}^{-}$};
\draw[densely dotted] (71pt,14pt) -- (71pt,-10pt);
\draw[densely dotted] (57pt,14pt) -- (57pt,-10pt);

\draw[densely dotted] (78pt,14pt) -- (78pt,-10pt);

\draw [overbrace style] (57pt,7pt) -- (71pt,7pt) node [overbrace text style] {$\alpha+\frac{1}{2}$};
\end{tikzpicture} \\
		\hline
		$\begin{scriptsize} G_{\times \zeta} \end{scriptsize}$
		& $\begin{scriptsize} \begin{aligned}
		\begin{cases}
		1								& \text{if } t \in v_{in}^+ \\
		1/\alpha				& \text{if } t \in [v_{out,\ell}^{-}, v_{out,\ell}^{-} + \min(\alpha + \epsilon,1)] \\
		0								& \text{otherwise} 
		\end{cases}				
		\end{aligned} \end{scriptsize}$
		& \begin{tikzpicture}
	\node (up) at (0pt,20pt) {};
	\node (a_1) at (0pt,0pt) {}; 
	\node (a_2) at (130pt, 0pt) {};
	\draw (a_1)--(a_2);
	\draw[fill=lightgray] (36pt,0pt) rectangle (57pt,10pt);
	\draw[fill=lightgray] (78pt,0pt) rectangle (92pt,15pt);

	\draw[<->,densely dotted] (78pt,18pt) -- (92pt,18pt);
	\node[above] (alpha) at (85pt,16pt) {\tiny{$\alpha+\epsilon$}};
	
	\draw[<->,densely dotted] (96pt,15pt) -- (96pt,0pt);
	\node[above] (alpha_2) at (106pt,0pt) {\tiny{$1/\alpha$}};

	\draw [
    thick,
    decoration={
        brace,
        mirror,
        raise=5pt
    },
    decorate
] (36pt,0pt) -- (57pt,0pt)
node [pos=0.5,anchor=north,yshift=-5pt] {$v_{in}^{+}$}; 

\draw [
    thick,
    decoration={
        brace,
        mirror,
        raise=5pt
    },
    decorate
] (78pt,0pt) -- (99pt,0pt)
node [pos=0.5,anchor=north,yshift=-5pt] {$v_{out}^{-}$};
	\end{tikzpicture} \\
		\hline	
		$\begin{scriptsize} G_{\neg} \end{scriptsize}$ 
		& $\begin{scriptsize} \begin{aligned}
		\begin{cases}
		1								& \text{if } t \in v_{in}^{-} \\
		1/2\epsilon			& \text{if } t \in [v_{out,\ell}^{-},v_{out,\ell}^{-}+\epsilon] \\
		1/2\epsilon			& \text{if } t \in [v_{out,r}^{-}-\epsilon,v_{out,r}^{-}] \\
		0								& \text{otherwise}   
		\end{cases}
		\end{aligned} \end{scriptsize}$
		& \begin{tikzpicture}
	\node (up) at (0pt,20pt) {};
	\node (a_1) at (0pt,0pt) {}; 
	\node (a_2) at (130pt, 0pt) {};
	\draw (a_1)--(a_2);
	\draw[fill=lightgray] (36pt,0pt) rectangle (57pt,10pt);
	
	\fill[fill=lightgray] (78pt,0pt) rectangle (82.5pt,24.2pt);
	\fill[fill=lightgray] (82.5pt,0pt) rectangle (110pt,0pt);
	\draw[fill=lightgray] (78pt,24.2pt) -- (78pt,0pt);
	\draw[fill=lightgray] (78pt,24.2pt) -- (82.5pt,24.2pt);
	\draw[fill=lightgray] (82.5pt,24.2pt) -- (82.5pt,0pt);
	\draw[fill=lightgray] (82.5pt,0pt) -- (99pt,0pt);
	\fill[fill=lightgray] (78pt,0pt) rectangle (94.5pt,0pt);
	\fill[fill=lightgray] (94.5pt,0pt) rectangle (99pt,24.2pt);
	\draw (78pt,0pt) -- (78pt,0pt);
	\draw (82.5pt,0pt) -- (94.5pt,0pt);
	\draw (94.5pt,0pt) -- (94.5pt,24.2pt);
	\draw (94.5pt,24.2pt) -- (99pt,24.2pt);
	\draw (99pt,24.2pt) -- (99pt,0pt);
	\draw (78pt,0pt) -- (99pt,0pt);
	
	\draw [
    thick,
    decoration={
        brace,
        mirror,
        raise=5pt
    },
    decorate
] (36pt,0pt) -- (57pt,0pt)
node [pos=0.5,anchor=north,yshift=-5pt] {$v_{in1}^{-}$}; 

\draw [
    thick,
    decoration={
        brace,
        mirror,
        raise=5pt
    },
    decorate
] (78pt,0pt) -- (99pt,0pt)
node [pos=0.5,anchor=north,yshift=-5pt] {$v_{out}^{-}$};

	\draw[<->] (78pt,27pt) -- (82.5pt,27pt);
	\node[above] (alpha) at (79.5pt,28pt) {\tiny{$\epsilon$}};
	
	\draw[<->] (94.5pt,27pt) -- (99pt,27pt);
	\node[above] (alpha) at (96pt,28pt) {\tiny{$\epsilon$}};
	
		\draw[<->] (102pt,24.2pt) -- (102pt,0pt);
		\node[above] (alpha_5) at (107pt,5pt) {\tiny{$\frac{1}{2\epsilon}$}};
	\end{tikzpicture} \\
		\hline 
		$\begin{scriptsize} G_{+} \end{scriptsize}$ 
		& $\begin{scriptsize} \begin{aligned}
		\begin{cases}
		1								& \text{if } t \in v_{in_1}^+ \cup v_{in_2}^+ \\
		1								& \text{if } t \in [v_{out,\ell}^{-},v_{out,r}^{-} - \epsilon] \\
		1/\epsilon +1		& \text{if } t \in [v_{out,r}^{-} - \epsilon,v_{out,r}^{-}] \\
		0								& \text{otherwise} 
		\end{cases}		
		\end{aligned} \end{scriptsize}$
		& \begin{tikzpicture}
	\node (up) at (0pt,20pt) {};
	\node (a_1) at (0pt,0pt) {}; 
	\node (a_2) at (130pt, 0pt) {};
	\draw (a_1)--(a_2);
	\draw[fill=lightgray] (15pt,0pt) rectangle (36pt,10pt);
	\draw[fill=lightgray] (52pt,0pt) rectangle (73pt,10pt);
	\fill[fill=lightgray] (89pt,0pt) rectangle (105.5pt,10pt);
	\fill[fill=lightgray] (105.5pt,0pt) rectangle (110pt,22.2pt);
	\draw (89pt,10pt) -- (89pt,0pt);
	\draw (89pt,10pt) -- (105.5pt,10pt);
	\draw (105.5pt,10pt) -- (105.5pt,22.2pt);
	\draw (105.5pt,22.2pt) -- (110pt,22.2pt);
	\draw (110pt,22.2pt) -- (110pt,0pt);
	\draw (89pt,0pt) -- (110pt,0pt);
	
	\draw [
    thick,
    decoration={
        brace,
        mirror,
        raise=5pt
    },
    decorate
] (15pt,0pt) -- (36pt,0pt)
node [pos=0.5,anchor=north,yshift=-5pt] {$v_{in1}^{+}$}; 

\draw [
    thick,
    decoration={
        brace,
        mirror,
        raise=5pt
    },
    decorate
] (52pt,0pt) -- (73pt,0pt)
node [pos=0.5,anchor=north,yshift=-5pt] {$v_{in2}^{+}$};

\draw [
    thick,
    decoration={
        brace,
        mirror,
        raise=5pt
    },
    decorate
] (89pt,0pt) -- (110pt,0pt)
node [pos=0.5,anchor=north,yshift=-5pt] {$v_{out}^{-}$};
	
	\draw[<->] (105.5pt,25pt) -- (110pt,25pt);
	\node[above] (alpha) at (107pt,25pt) {\tiny{$\epsilon$}};
	
	\draw[<->] (113pt,22.2pt) -- (113pt,10pt);
	\node[above] (alpha_2) at (119pt,8pt) {\tiny{$\frac{1}{\epsilon}$}}; 
	
	\draw[densely dotted] (113pt,22.2pt) -- (125.5pt,22.2pt);
	\draw[densely dotted] (113pt,10pt) -- (125.5pt,10pt);
	\end{tikzpicture} \\
		\hline
		$\begin{scriptsize} G_{-} \end{scriptsize}$ 
		& $\begin{scriptsize} \begin{aligned}
		\begin{cases}
		1								& \text{if } t \in v_{in_1}^+ \cup v_{in_2}^{-} \\
		1								& \text{if } t \in [v_{out,\ell}^{-}+\epsilon,v_{out,r}^{-}] \\
		1/\epsilon +1		& \text{if } t \in [v_{out,\ell}^{-},v_{out,\ell}^{-}+\epsilon] \\
		0								& \text{otherwise} 
		\end{cases}
		\end{aligned} \end{scriptsize}$
		& \begin{tikzpicture}
		\node (up) at (0pt,20pt) {};
	\node (a_1) at (0pt,0pt) {}; 
	\node (a_2) at (130pt, 0pt) {};
	\draw (a_1)--(a_2);
	\draw[fill=lightgray] (15pt,0pt) rectangle (36pt,10pt);
	\draw[fill=lightgray] (52pt,0pt) rectangle (73pt,10pt);
	\fill[fill=lightgray] (89pt,0pt) rectangle (93.5pt,22.2pt);
	\fill[fill=lightgray] (93.5pt,0pt) rectangle (110pt,10pt);
	\draw (89pt,22.2pt) -- (89pt,0pt);
	\draw (89pt,22.2pt) -- (93.5pt,22.2pt);
	\draw (93.5pt,22.2pt) -- (93.5pt,10pt);
	\draw (93.5pt,10pt) -- (110pt,10pt);
	\draw (110pt,10pt) -- (110pt,0pt);
	\draw (89pt,0pt) -- (110pt,0pt);

	\draw [
    thick,
    decoration={
        brace,
        mirror,
        raise=5pt
    },
    decorate
] (15pt,0pt) -- (36pt,0pt)
node [pos=0.5,anchor=north,yshift=-5pt] {$v_{in1}^{+}$}; 

\draw [
    thick,
    decoration={
        brace,
        mirror,
        raise=5pt
    },
    decorate
] (52pt,0pt) -- (73pt,0pt)
node [pos=0.5,anchor=north,yshift=-5pt] {$v_{in2}^{-}$};

\draw [
    thick,
    decoration={
        brace,
        mirror,
        raise=5pt
    },
    decorate
] (89pt,0pt) -- (110pt,0pt)
node [pos=0.5,anchor=north,yshift=-5pt] {$v_{out}^{-}$};
	
	\draw[<->] (89pt,25pt) -- (93.5pt,25pt);
	\node[above] (alpha) at (90.5pt,25pt) {\tiny{$\epsilon$}};
	
	\draw[<->] (86pt,22.2pt) -- (86pt,10pt);
	
		\node[above] (alpha_2) at (81pt,8pt) {\tiny{$\frac{1}{\epsilon}$}}; 
		
		\draw[densely dotted] (75pt,22.2pt) -- (86.5pt,22.2pt);
		\draw[densely dotted] (75pt,10pt) -- (86.5pt,10pt);
	\end{tikzpicture} \\
		\hline
		$\begin{scriptsize} G_{<} \end{scriptsize}$ 
		& $\begin{scriptsize} \begin{aligned}
		\begin{cases}
		1								& \text{if } t \in v_{in_1}^+ \cup v_{in_2}^{-} \\
		1/\epsilon			& \text{if } t \in [v_{out,\ell}^{-},v_{out,\ell}^{-}+\epsilon] \\
		1/\epsilon			& \text{if } t \in [v_{out,r}^{-}-\epsilon,v_{out,r}^{-}] \\
		0								& \text{otherwise} 
		\end{cases}
		\end{aligned} \end{scriptsize}$
		& \begin{tikzpicture}
		\node (up) at (0pt,20pt) {};
	\node (a_1) at (0pt,0pt) {}; 
	\node (a_2) at (130pt, 0pt) {};
	\draw (a_1)--(a_2);
	\draw[fill=lightgray] (15pt,0pt) rectangle (36pt,10pt);
	\draw[fill=lightgray] (52pt,0pt) rectangle (73pt,10pt);
	\fill[fill=lightgray] (89pt,0pt) rectangle (93.5pt,24.2pt);
	\fill[fill=lightgray] (93.5pt,0pt) rectangle (110pt,0pt);
	\draw[fill=lightgray] (89pt,24.2pt) -- (89pt,0pt);
	\draw[fill=lightgray] (89pt,24.2pt) -- (93.5pt,24.2pt);
	\draw[fill=lightgray] (93.5pt,24.2pt) -- (93.5pt,0pt);
	\draw[fill=lightgray] (93.5pt,0pt) -- (110pt,0pt);
	\draw[fill=lightgray] (15pt,0pt) rectangle (36pt,10pt);
	\draw[fill=lightgray] (52pt,0pt) rectangle (73pt,10pt);
	\fill[fill=lightgray] (89pt,0pt) rectangle (105.5pt,0pt);
	\fill[fill=lightgray] (105.5pt,0pt) rectangle (110pt,24.2pt);
	\draw (89pt,0pt) -- (89pt,0pt);
	\draw (93.5pt,0pt) -- (105.5pt,0pt);
	\draw (105.5pt,0pt) -- (105.5pt,24.2pt);
	\draw (105.5pt,24.2pt) -- (110pt,24.2pt);
	\draw (110pt,24.2pt) -- (110pt,0pt);
	\draw (89pt,0pt) -- (110pt,0pt);

	\draw [
    thick,
    decoration={
        brace,
        mirror,
        raise=5pt
    },
    decorate
] (15pt,0pt) -- (36pt,0pt)
node [pos=0.5,anchor=north,yshift=-5pt] {$v_{in1}^{+}$}; 

\draw [
    thick,
    decoration={
        brace,
        mirror,
        raise=5pt
    },
    decorate
] (52pt,0pt) -- (73pt,0pt)
node [pos=0.5,anchor=north,yshift=-5pt] {$v_{in2}^{-}$};

\draw [
    thick,
    decoration={
        brace,
        mirror,
        raise=5pt
    },
    decorate
] (89pt,0pt) -- (110pt,0pt)
node [pos=0.5,anchor=north,yshift=-5pt] {$v_{out}^{-}$};
	
	\draw[<->] (89pt,27pt) -- (93.5pt,27pt);
	\node[above] (alpha) at (90.5pt,28pt) {\tiny{$\epsilon$}};
	
	\draw[<->] (105.5pt,27pt) -- (110pt,27pt);
	\node[above] (alpha_3) at (107pt,27pt) {\tiny{$\epsilon$}};
	
	
	\draw[<->] (113pt,24.2pt) -- (113pt,0pt);
	\node[above] (alpha_5) at (118pt,5pt) {\tiny{$\frac{1}{\epsilon}$}};
	
	\end{tikzpicture} \\
		\hline
		$\begin{scriptsize} G_{\vee} \end{scriptsize}$ 
		& $\begin{scriptsize} \begin{aligned}
		\begin{cases}
		1								& \text{if } t \in v_{in_1}^+ \cup v_{in_2}^{+} \\
		0.5/\epsilon		& \text{if } t \in [v_{out,\ell}^{-},v_{out,\ell}^{-}+\epsilon] \\
		1.5/\epsilon		& \text{if } t \in [v_{out,r}^{-}-\epsilon,v_{out,r}^{-}] \\
		0								& \text{otherwise} 
		\end{cases}
		\end{aligned} \end{scriptsize}$
		& \begin{tikzpicture}
		\node (up) at (0pt,20pt) {};
	\node (a_1) at (0pt,0pt) {}; 
	\node (a_2) at (130pt, 0pt) {};
	\draw (a_1)--(a_2);
	\draw[fill=lightgray] (15pt,0pt) rectangle (36pt,10pt);
	\draw[fill=lightgray] (52pt,0pt) rectangle (73pt,10pt);
	\fill[fill=lightgray] (89pt,0pt) rectangle (93.5pt,14.2pt);
	\fill[fill=lightgray] (93.5pt,0pt) rectangle (110pt,0pt);
	\draw[fill=lightgray] (89pt,14.2pt) -- (89pt,0pt);
	\draw[fill=lightgray] (89pt,14.2pt) -- (93.5pt,14.2pt);
	\draw[fill=lightgray] (93.5pt,14.2pt) -- (93.5pt,0pt);
	\draw[fill=lightgray] (93.5pt,0pt) -- (110pt,0pt);
	\draw[fill=lightgray] (15pt,0pt) rectangle (36pt,10pt);
	\draw[fill=lightgray] (52pt,0pt) rectangle (73pt,10pt);
	\fill[fill=lightgray] (89pt,0pt) rectangle (105.5pt,0pt);
	\fill[fill=lightgray] (105.5pt,0pt) rectangle (110pt,24.2pt);
	\draw (89pt,0pt) -- (89pt,0pt);
	\draw (93.5pt,0pt) -- (105.5pt,0pt);
	\draw (105.5pt,0pt) -- (105.5pt,24.2pt);
	\draw (105.5pt,24.2pt) -- (110pt,24.2pt);
	\draw (110pt,24.2pt) -- (110pt,0pt);
	\draw (89pt,0pt) -- (110pt,0pt);

	\draw [
    thick,
    decoration={
        brace,
        mirror,
        raise=5pt
    },
    decorate
] (15pt,0pt) -- (36pt,0pt)
node [pos=0.5,anchor=north,yshift=-5pt] {$v_{in1}^{+}$}; 

\draw [
    thick,
    decoration={
        brace,
        mirror,
        raise=5pt
    },
    decorate
] (52pt,0pt) -- (73pt,0pt)
node [pos=0.5,anchor=north,yshift=-5pt] {$v_{in2}^{+}$};

\draw [
    thick,
    decoration={
        brace,
        mirror,
        raise=5pt
    },
    decorate
] (89pt,0pt) -- (110pt,0pt)
node [pos=0.5,anchor=north,yshift=-5pt] {$v_{out}^{-}$};
	
	\draw[<->] (89pt,17pt) -- (93.5pt,17pt);
	\node[above] (alpha) at (90.5pt,18pt) {\tiny{$\epsilon$}};
	
	\draw[<->] (86pt,14.2pt) -- (86pt,0pt);
	\node[above] (alpha_2) at (80pt,0pt) {\tiny{$\frac{0.5}{\epsilon}$}}; 
	
	\draw[<->] (105.5pt,27pt) -- (110pt,27pt);
	\node[above] (alpha_3) at (107pt,27pt) {\tiny{$\epsilon$}};
	
	
	\draw[<->] (113pt,24.2pt) -- (113pt,0pt);
	\node[above] (alpha_5) at (120pt,5pt) {\tiny{$\frac{1.5}{\epsilon}$}};
	
	
	
	\end{tikzpicture} \\
		\hline
		$\begin{scriptsize} G_{\wedge} \end{scriptsize}$ 
		&	$\begin{scriptsize}
		\begin{cases}
		1								& \text{if } t \in v_{in_1}^+ \cup v_{in_2}^{+} \\
		1.5/\epsilon		& \text{if } t \in [v_{out,\ell}^{-},v_{out,\ell}^{-}+\epsilon] \\
		0.5/\epsilon		& \text{if } t \in [v_{out,r}^{-}-\epsilon,v_{out,r}^{-}] \\
		0								& \text{otherwise} 
		\end{cases}
		\end{scriptsize}$
		& \begin{tikzpicture}
		\node (up) at (0pt,20pt) {};
	\node (a_1) at (0pt,0pt) {}; 
	\node (a_2) at (130pt, 0pt) {};
	\draw (a_1)--(a_2);
	\draw[fill=lightgray] (15pt,0pt) rectangle (36pt,10pt);
	\draw[fill=lightgray] (52pt,0pt) rectangle (73pt,10pt);
	\fill[fill=lightgray] (89pt,0pt) rectangle (93.5pt,24.2pt);
	\fill[fill=lightgray] (93.5pt,0pt) rectangle (110pt,0pt);
	\draw[fill=lightgray] (89pt,24.2pt) -- (89pt,0pt);
	\draw[fill=lightgray] (89pt,24.2pt) -- (93.5pt,24.2pt);
	\draw[fill=lightgray] (93.5pt,24.2pt) -- (93.5pt,0pt);
	\draw[fill=lightgray] (93.5pt,0pt) -- (110pt,0pt);
	\draw[fill=lightgray] (15pt,0pt) rectangle (36pt,10pt);
	\draw[fill=lightgray] (52pt,0pt) rectangle (73pt,10pt);
	\fill[fill=lightgray] (89pt,0pt) rectangle (105.5pt,0pt);
	\fill[fill=lightgray] (105.5pt,0pt) rectangle (110pt,14.2pt);
	\draw (89pt,0pt) -- (89pt,0pt);
	\draw (93.5pt,0pt) -- (105.5pt,0pt);
	\draw (105.5pt,0pt) -- (105.5pt,14.2pt);
	\draw (105.5pt,14.2pt) -- (110pt,14.2pt);
	\draw (110pt,14.2pt) -- (110pt,0pt);
	\draw (89pt,0pt) -- (110pt,0pt);

	\draw [
    thick,
    decoration={
        brace,
        mirror,
        raise=5pt
    },
    decorate
] (15pt,0pt) -- (36pt,0pt)
node [pos=0.5,anchor=north,yshift=-5pt] {$v_{in1}^{+}$}; 

\draw [
    thick,
    decoration={
        brace,
        mirror,
        raise=5pt
    },
    decorate
] (52pt,0pt) -- (73pt,0pt)
node [pos=0.5,anchor=north,yshift=-5pt] {$v_{in2}^{+}$};

\draw [
    thick,
    decoration={
        brace,
        mirror,
        raise=5pt
    },
    decorate
] (89pt,0pt) -- (110pt,0pt)
node [pos=0.5,anchor=north,yshift=-5pt] {$v_{out}^{-}$};
	
	\draw[<->] (89pt,27pt) -- (93.5pt,27pt);
	\node[above] (alpha) at (90.5pt,28pt) {\tiny{$\epsilon$}};
	
	\draw[<->] (86pt,24.2pt) -- (86pt,0pt);
	\node[above] (alpha_2) at (80pt,5pt) {\tiny{$\frac{1.5}{\epsilon}$}}; 
	
	\draw[<->] (105.5pt,17pt) -- (110pt,17pt);
	\node[above] (alpha_3) at (107pt,17pt) {\tiny{$\epsilon$}};
	
	
	\draw[<->] (113pt,14.2pt) -- (113pt,0pt);
	\node[above] (alpha_5) at (120pt,0pt) {\tiny{$\frac{0.5}{\epsilon}$}};
	
	
	
	\end{tikzpicture} \\
		\hline
	\end{tabular}
\end{table}

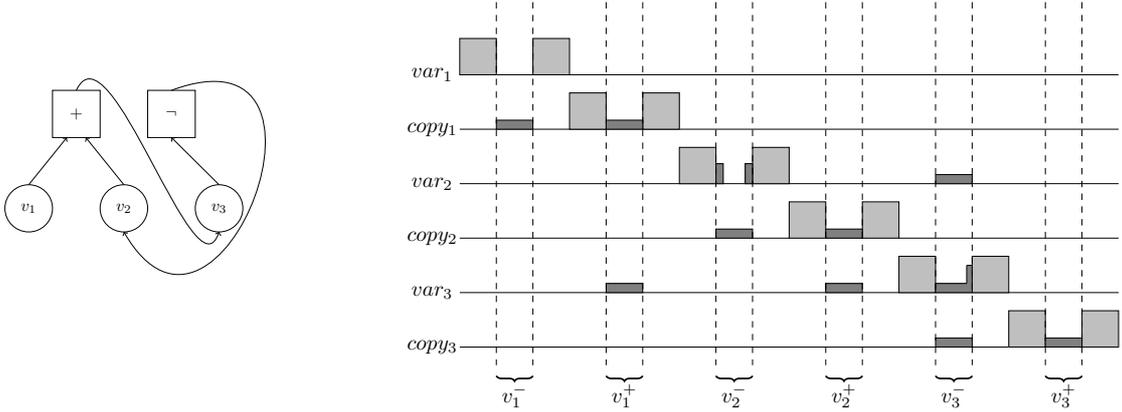
\begin{figure}
	\centering
	\begin{tabular}{cc}
		\begin{minipage}{0.3\textwidth}
			\resizebox{1\linewidth}{!}{\begin{tikzpicture}
\draw (0,0) circle (0.5cm) node { $v_1$};
\draw (2,0) circle (0.5cm) node { $v_2$};
\draw (4,0) circle (0.5cm) node { $v_3$};

\draw (0.5,1.5) rectangle (1.5,2.5) node[pos=.5] {$+$};
\draw (2.5,1.5) rectangle (3.5,2.5) node[pos=.5] {$\lnot$};

\draw[->] (0,0.5) -- (0.8,1.5);
\draw[->] (2,0.5) -- (1.2,1.5);

\draw[->] (4,0.5) -- (3,1.5);

\draw [->] (1,2.5) .. controls (1.7,4) and (3.5,-2) .. (4,-0.5);
\draw [->] (3,2.5) .. controls (7,4) and (4,-4) .. (2,-0.5);
\end{tikzpicture}}
		\end{minipage} &
		\begin{minipage}{0.6\textwidth}
			\resizebox{1\linewidth}{!}{\begin{tikzpicture}[scale=0.3]
\node (a_1) at (-1.5,15) {$var_1$}; 
\node (a_2) at (-1.5,12) {$copy_1$}; 

\node (b_1) at (-1.5,9) {$var_2$}; 
\node (b_2) at (-1.5,6) {$copy_2$}; 

\node (c_1) at (-1.5,3) {$var_3$}; 
\node (c_2) at (-1.5,0) {$copy_3$}; 

\draw (0,0) -- (36,0);
\draw (0,3) -- (36,3);
\draw (0,6) -- (36,6);
\draw (0,9) -- (36,9);
\draw (0,12) -- (36,12);
\draw (0,15) -- (36,15);

\draw[fill=lightgray] (0,15) rectangle (2,17);
\draw[fill=lightgray] (4,15) rectangle (6,17);

\draw[fill=lightgray] (6,12) rectangle (8,14);
\draw[fill=lightgray] (10,12) rectangle (12,14);

\draw[fill=lightgray] (12,9) rectangle (14,11);
\draw[fill=lightgray] (16,9) rectangle (18,11);

\draw[fill=lightgray] (18,6) rectangle (20,8);
\draw[fill=lightgray] (22,6) rectangle (24,8);

\draw[fill=lightgray] (24,3) rectangle (26,5);
\draw[fill=lightgray] (28,3) rectangle (30,5);

\draw[fill=lightgray] (30,0) rectangle (32,2);
\draw[fill=lightgray] (34,0) rectangle (36,2);

\fill[fill=gray] (26,3) rectangle (27.7,3.5);
\fill[fill=gray] (27.7,3) rectangle (28,4.5);
\draw (26,3.5) -- (26,3);
\draw (26,3.5) -- (27.7,3.5);
\draw (27.7,3.5) -- (27.7,4.5);
\draw (27.7,4.5) -- (28,4.5);
\draw (28,4.5) -- (28,3);
\draw (26,3) -- (28,3);

\draw[fill=gray] (8,3) rectangle (10,3.5);
\draw[fill=gray] (20,3) rectangle (22,3.5);

\draw[fill=gray] (14,9) rectangle (14.4,10.1);
\draw[fill=gray] (15.6,9) rectangle (16,10.1);

\draw[fill=gray] (26,9) rectangle (28,9.5);

\draw[fill=gray] (2,12) rectangle (4,12.5);
\draw[fill=gray] (8,12) rectangle (10,12.5);

\draw[fill=gray] (14,6) rectangle (16,6.5);
\draw[fill=gray] (20,6) rectangle (22,6.5);

\draw[fill=gray] (26,0) rectangle (28,0.5);
\draw[fill=gray] (32,0) rectangle (34,0.5);

\draw[dashed] (2,19) -- (2,-1);
\draw[dashed] (4,19) -- (4,-1);

\draw[dashed] (8,19) -- (8,-1);
\draw[dashed] (10,19) -- (10,-1);

\draw[dashed] (14,19) -- (14,-1);
\draw[dashed] (16,19) -- (16,-1);

\draw[dashed] (20,19) -- (20,-1);
\draw[dashed] (22,19) -- (22,-1);

\draw[dashed] (26,19) -- (26,-1);
\draw[dashed] (28,19) -- (28,-1);

\draw[dashed] (32,19) -- (32,-1);
\draw[dashed] (34,19) -- (34,-1);

\draw [
thick,
decoration={
	brace,
	mirror,
	raise=5pt
},
decorate
] (2,-1) -- (4,-1)
node [pos=0.5,anchor=north,yshift=-5pt] {$v_{1}^{-}$};

\draw [
thick,
decoration={
	brace,
	mirror,
	raise=5pt
},
decorate
] (8,-1) -- (10,-1)
node [pos=0.5,anchor=north,yshift=-5pt] {$v_{1}^{+}$};

\draw [
thick,
decoration={
	brace,
	mirror,
	raise=5pt
},
decorate
] (14,-1) -- (16,-1)
node [pos=0.5,anchor=north,yshift=-5pt] {$v_{2}^{-}$};

\draw [
thick,
decoration={
	brace,
	mirror,
	raise=5pt
},
decorate
] (20,-1) -- (22,-1)
node [pos=0.5,anchor=north,yshift=-5pt] {$v_{2}^{+}$};

\draw [
thick,
decoration={
	brace,
	mirror,
	raise=5pt
},
decorate
] (26,-1) -- (28,-1)
node [pos=0.5,anchor=north,yshift=-5pt] {$v_{3}^{-}$};

\draw [
thick,
decoration={
	brace,
	mirror,
	raise=5pt
},
decorate
] (32,-1) -- (34,-1)
node [pos=0.5,anchor=north,yshift=-5pt] {$v_{3}^{+}$};

\end{tikzpicture}}
		\end{minipage}
	\end{tabular}
	\caption{An instance of $\epsilon$-{\sc{Gcircuit}} with the corresponding construction for $(n,\epsilon')$-{\sc{Chalving}}.}
	\label{fig:circuit}
\end{figure}

\begin{lemma}\label{lem:reduc}
	Given the construction of a $(n,n,\epsilon')$-\CSH\ instance above, for $\epsilon' < \min\{\epsilon/11,1/40\}$, a partition with $n$ cuts corresponds to a solution to $\epsilon$-\GC.
\end{lemma}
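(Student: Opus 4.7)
The plan is to extract, from any $\epsilon'$-approximate consensus-halving solution with exactly $n=2N$ cuts, an assignment $\xx\in[0,1]^N$ that satisfies every gate constraint of Table~\ref{table:gcircuit}. The argument will have three stages: (i) prove a rigid structural characterization of the cut positions and the labellings of the resulting pieces; (ii) read $\xx$ off from the cut positions; and (iii) verify, for each of the nine gate types, that the $\epsilon'$-balance condition of the relevant $var_{out}$ agent translates, up to small slack, into the gate's constraint in Table~\ref{table:gcircuit}.

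For the first stage I will show that each region $[d_i,d_{i+1}]$ must contain exactly two cuts. On the lower-bound side, inspection of the asymmetric profiles of $var_i$ (mass $4$ on each of $[d_i,d_i+1]$ and $[d_i+2,d_i+3]$) and of $copy_i$ (mass $0,1,0,4,1,4$ on the six unit sub-intervals) shows that no single cut inside $[d_i,d_{i+1}]$ can simultaneously $\epsilon'$-balance both agents: $var_i$-balance requires a cut separating the two border blocks, while $copy_i$-balance essentially pins a cut near $d_i+4$, and these two demands are incompatible. Combined with the total budget of $2N$ cuts over $N$ regions, this forces exactly two cuts per region. A case analysis over the possible positions of the two cuts---using the $4$-versus-$1$ density contrast of $copy_i$ together with the bound $\epsilon'<1/40$ to forbid placements like two cuts in $v_i^-\cup v_i^+$ or cuts inside the border blocks---then pins down the locations to one cut in $v_i^-$ and one in $[d_i+4,d_i+5]$, up to $O(\epsilon')$, and the $copy_i$-balance equation forces the two cuts to share a common offset $y_i$. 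The resulting three pieces inside each region must alternate in label for $var_i$-balance, and this alternation propagates globally across region boundaries, fixing every piece-label up to a single overall sign.

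In the second stage I will define $x[v_i]$ to be an explicit affine function of $y_i$, chosen so that, for instance, a $G_{\zeta}$ gate with parameter $\alpha$ yields $\xx[v_{out}]=\alpha$. In the third stage I will handle each of the nine gate types in turn. The template for every verification is the same: the extra valuation $G^\tau$ of $var_{out}$ places mass in a combination of input-side sub-intervals $v_{in_k}^{\pm}$ (each entirely on one side of the partition, as determined by stage (i)) and in $v_{out}^-$ in the form of a \emph{penalty strip} whose density $1/\epsilon$ or $1/\alpha$ is large enough that any misplacement of the cut inside $v_{out}^-$ by more than $O(\epsilon')$ creates an $\Omega(1)$ imbalance, outside the $\epsilon'$ tolerance; writing out the $\epsilon'$-balance equation for $var_{out}$ then yields, up to accumulated $O(\epsilon')$ slack, the arithmetic or logical relationship between $y_{out}$ and the input offsets $y_{in_k}$ demanded by the gate, and the $\xx[\cdot]$ values are recovered via the encoding from stage (ii).

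The principal obstacle is the structural first stage: ruling out the alternative two-cut placements requires showing that each competitor configuration drives at least one of the $2N$ agents past the $\epsilon'$ tolerance, and the argument must simultaneously track the global propagation of piece labels across region boundaries. Once the structure is fixed, each of the nine gate verifications is a short computation in which at most a small constant number of $\epsilon'$-order error terms accumulate, so that the bound $\epsilon'<\min\{\epsilon/11,1/40\}$ is enough to conclude every gate constraint at precision $\epsilon$.
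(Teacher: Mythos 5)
Your structural analysis of where the cuts fall is correct, and it even fixes what appears to be a typo in the paper: the text defines $v_i^+ := [d_i+3, d_i+4]$ and writes the second cut as $d_i+3+t_i^+$, but a direct computation of $copy_i$'s balance (with masses $0,1,0,4,1,4$ on the six unit sub-intervals and the leftmost piece in $O_-$) pins the second cut near $d_i+4+t_i^-$, i.e.\ in $[d_i+4,d_i+5]$, as you claim and as the paper's Figure~\ref{fig:circuit} shows. So stages (i) and (ii) of your plan match the paper's argument.

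The gap is in stage (iii). You write that $G^\tau$ places mass on input-side sub-intervals $v_{in_k}^{\pm}$ that are ``each entirely on one side of the partition.'' This cannot be right, and if it were, the reduction would fail. The entire mechanism by which a gate's $var_{out}$ agent ``reads'' the input value $x_{in}$ is that the sub-interval $v_{in}^{+}$ (resp.\ $v_{in}^{-}$) \emph{contains} the cut at $d_{in}+4+t_{in}^{+}$ (resp.\ $d_{in}+1+t_{in}^{-}$), so the unit mass that $G^\tau$ places there is split between $O_+$ and $O_-$ in proportion $t_{in}^{\pm}$ versus $1-t_{in}^{\pm}$. If $v_{in}^{\pm}$ lay entirely on one side, that mass would be a constant independent of $x_{in}$, and the $\epsilon'$-balance equation for $var_{out}$ would determine $t_{out}^{-}$ as a constant rather than as a function of the inputs, so none of the arithmetic gates ($G_{\times\zeta}$, $G_{+}$, $G_{-}$, etc.) would compute what they are supposed to. You can see this explicitly by writing the balance for $G_{\times\zeta}$: with border mass $4$ on each side and the $1/\alpha$-density strip in $v_{out}^{-}$, balance gives $t_{out}^{-}\approx \alpha t_{in}^{+}+\epsilon/2$ precisely because the contribution from $v_{in}^{+}$ is $t_{in}^{+}$ versus $1-t_{in}^{+}$, not $1$ versus $0$. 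Until this is corrected, the ``template'' for your nine gate verifications is not a proof; each gate check must track how the input cuts split the $G^\tau$ mass inside $v_{in_k}^{\pm}$, which is exactly what the paper's case-by-case verification does.
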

 
\begin{proof}
First observe that since all of the agents $var_i$ and $copy_i$ are constructed to have at least $3/4$ of their valuation on $[d_i,d_i +3]$ and $[d_i+3,d_i+6]$ respectively, there must be at least one cut in each one of those intervals in any $\epsilon'$-approximate solution to Consensus-halving (with $\epsilon' < 1/4$) and therefore any $\epsilon'$-approximate solution to Consensus-halving with $2N$ cuts must have exactly one cut in each interval. Furthermore, since the constructed instance consists of $2N$ agents, by \cite{simmons2003consensus}, such a partition with $2N$ cuts is guaranteed to exist. 
 

Now consider such a solution $\CCC$ to $(n,n,\epsilon')$-\CSH\ with $2N$ cuts. For each agent $var_i$ (and associated gate $G^{\tau}$, if any), since her valuation in $v_i^{-}$ is at least the same as her valuation outside the interval $[d_i,d_i+3]$, the cut from $\CCC$ in $[d_i,d_i+3]$ must be in $[d_i+1-\epsilon',d_i+2+\epsilon']$, since $\CCC$ is a solution to $(n,n,\epsilon')$-\CSH. We will assume without loss of generality that the leftmost piece of the partition $\CCC$ is in $O_{-}$. Notice then that for each node $v_i$, the piece on the left-hand side of the cut in $v_i^{-}$ is always in $O_{-}$ and the piece on the left-hand side of the cut in $v_i^{+}$ is always in $O_{+}$. 
Let the location of the cut be $d_i+1+t_i^{-}$ where $t_i^{-} \in [-\epsilon',1+\epsilon']$. Analogously, the same argument holds for agent $copy_i$ and the interval $[d_i+3-\epsilon',d_i+4+\epsilon']$, and define $t_i^{+} \in [-\epsilon',1+\epsilon']$ similarly.

Now consider the agent $copy_i$ and the cut at location $d_i+1+t_i^{-}$. If $t_i^{-} \in [0,1]$, then since agent $copy_i$ has valuation $1$ on interval $v_i^{-}$, $t_i^{-}$ of her valuation will be on a piece in $O_{-}$ and $1-t_i^{-}$ of her valuation will be on a piece in $O_{+}$. Then, since $\CCC$ is a solution to $(n,n,\epsilon')$-\CSH, the cut in $d_i+3+t_i^{+}$ must be placed so that $|t_i^{-}-t_i^{+}| \leq \epsilon'/2$; similarly for the cases where $t_i^{-} \notin [0,1]$. In other words, $copy_i$ ensures that the cut at $d_i+1+t_i^{-}$ is ``copied" $\epsilon'$-approximately. 

We will interpret the solution $\CCC$ as a solution to $\epsilon$-\GC\ in the following way. For each node $v_i$ and each associated cut at $d_i+1+t_i^{-}$ let
\begin{ceqn}
\begin{equation}\label{eq:x}
x_i  := \begin{cases}
0				&,\ \ \ t_i^{-} < 0\\
t_i^{-}			&,\ \ \ t_i^{-} \in [0,1]\\
1				&,\ \ \ t_i^{-} > 1 \end{cases}
\end{equation}
\end{ceqn}
and notice 
\begin{ceqn}
\begin{equation}\label{eq:t}
|t_i^{+}  - x_i| \leq 2\epsilon' \ \ \ , \ \ \ |t_i^{-}  - x_i| \leq 2\epsilon'
\end{equation}
\end{ceqn}
To complete the proof, we just need to argue that these variables satisfy the constraints of the gates of the circuit. 

\medskip

\noindent \textbf{\emph{Constant gate}} $\tau = (G_{\zeta},nil,nil,v_{out},\alpha)$: The valuation of agent $var_{out}$ for the intervals $[d_i,d_i+1+\alpha]$ and $[d_i+1+\alpha,d_i+3]$ is the same and since the height of  the agent's value density function is at least 1 in $[d_i,d_i+3]$,\footnote{Notice that the constant gate is the only gate where $border_i$ and $G^\tau$ overlap.} it holds that $t_{out}^{-} \in [\alpha - \epsilon', \alpha + \epsilon']$. Then, by Equation \ref{eq:t}, it holds that $x_{out} \in [\alpha - 3\epsilon', \alpha + 3\epsilon']$, so for $\epsilon' < \epsilon/3$ the gate constraint is satisfied.

\medskip

\noindent \emph{\textbf{Multiplication-by-scalar gate}} $\tau = (G_{\times \zeta},v_{in},nil,v_{out},\alpha)$. Notice that for any given cut $t_{in}^{+}$ and $1-\alpha \geq \epsilon$, it holds that $t_{out}^{-} \in [\alpha t_{in}^{+} + \epsilon/2 - \epsilon',\alpha t_{in}^{+} + \epsilon/2 + \epsilon']$ as the height of $G^\tau$ in $v_{out}^{-}$ is at least 1. Similarly, for the case $1-\alpha < \epsilon$ and any given cut $t_{in}^{+}$, it holds that $t_{out}^{-}\in [\alpha t_{in}^{+} + (1-\alpha)/2 - \epsilon',\alpha t_{in}^{+} + (1-\alpha)/2 + \epsilon']$ as the height of $G^\tau$ in $v_{out}^{-}$ is at least 1. In particular, since $1-\alpha < \epsilon$, it also holds that $t_{out}^{-} \in [\alpha t_{in}^{+} + \epsilon/2 - \epsilon',\alpha t_{in}^{+} + \epsilon/2 + \epsilon']$ for this case as well. Then, by Equation \ref{eq:t}, it holds that $x_{out} \in [\alpha t_{in}^{+} + \epsilon/2 - 3\epsilon',\alpha t_{in}^{+} + \epsilon/2 + 3\epsilon']$ and since $\alpha \leq 1$ it also holds that $x_{out} \in [\alpha x_{in} + \epsilon/2 - 5\epsilon',\alpha x_{in} + \epsilon/2 + 5\epsilon']$, again by Equation \ref{eq:t}. Then the gate constraint is satisfied whenever $\epsilon' < \epsilon/10$.

\medskip

\noindent \emph{\textbf{Addition gate}} $\tau = (G_{{+}},v_{in_1},v_{in_2},v_{out},nil)$. If for the cuts $t_{in_1}^+$ and $t_{in_2}^+$ it holds that $t_{in_1}^{+} + t_{in_2}^{+} < 1 - \epsilon + 4\epsilon'$ then $t_{out}^{-}  \in [t_{in_1}^{+} + t_{in_2}^{+}-5\epsilon',t_{in_1}^{+} + t_{in_2}^{+} + 5\epsilon']$ as the height of $G^\tau$ in $v_{out}^{-}$ is at least 1. This then implies that $x_{out} \in [x_{in_1}^{+} + x_{in_2}^{+}-11\epsilon',x_{in_1}^{+} + x_{in_2}^{+} + 11\epsilon']$, by Inequality \ref{eq:t}. On the other hand, when $t_{in_1}^{+} + t_{in_2}^{+} \geq 1 - \epsilon + 4\epsilon'$, then by Definition \ref{eq:x}, it holds that $x_{in_1}+x_{in_2} \in [1-\epsilon,1]$ and clearly $t_{out}^{-} \in [1 - \epsilon, 1+\epsilon']$ which by Definition \ref{eq:x} implies that $x_{out} \in [1-\epsilon,1]$. The gate constraints are satisfied for $\epsilon' < \epsilon/11$ for each of the cases.

\medskip


\noindent \emph{\textbf{Subtraction gate}} $\tau = (G_{{-}},v_{in_1},v_{in_2},v_{out},nil)$. Analogously to the addition gate described above, when for the cuts $t_{in_1}^{+}$ and $t_{in_1}^{+}$ it holds that $t_{in_1}^{+} - t_{in_2}^{+} > \epsilon - 4\epsilon'$ then $t_{out}^{-} \in [t_{in_1}^{+} - t_{in_2}^{+}-5\epsilon',t_{in_1}^{+} - t_{in_2}^{-} + 5\epsilon']$ as the height of $G^\tau$ in $v_{out}^{-}$ is at least 1. This implies that $x_{out} \in [x_{in_1}^{+} - x_{in_2}^{-}-11\epsilon',x_{in_1}^{+} - x_{in_2}^{-} + 11\epsilon']$ by Inequality \ref{eq:t}. On the other hand when $t_{in_1}^{+} - t_{in_2}^{-} \leq \epsilon - \epsilon'$, which implies that $x_{in_1}+x_{in_2} \in [0,\epsilon]$ by Definition \ref{eq:x}, it clearly holds that $t_{out}^{-} \in [-\epsilon',\epsilon]$ and hence by Definition \ref{eq:x}, we have $x_{out} \in [0,\epsilon]$. The gate constraints are satisfied for $\epsilon' < \epsilon/11$ for each of the cases.

\medskip

\noindent \emph{\textbf{Less-than-equal gate}} $\tau = (G_{{<}},v_{in_1},v_{in_2},v_{out},nil)$. We will consider three cases, depending on the positions of the cuts $t_{in_1}^{+}$ and $t_{in_2}^{-}$. First, when $|t_{in_1}^{+} - t_{in_2}^{-}| < \epsilon- 4\epsilon'$, by Inequality \ref{eq:t} it holds that $|x_{in_1} - x_{in_2}| < \epsilon$ and the output of the gate is unconstrained. When $t_{in_1}^{+} - t_{in_2}^{-} \geq \epsilon - 4\epsilon'$ then by Inequality \ref{eq:t} it holds that $x_{in_1} \geq  x_{in_2}+\epsilon$. Additionally, since the height of $G^\tau$ in $[v_{out,r}^{-} - \epsilon,v_{out,r}^{-}]$ is at least 1, it holds that $t_{out}^{-} \in [1 - \epsilon,1+\epsilon']$, which by Definition \ref{eq:x} implies that $x_{out}^{-} \in [1 - \epsilon,1]$ and the gate constraint is satisfied. The argument for the case when $t_{in_2}^{-} > t_{in_1}^{+} - 2\epsilon'$ is completely symmetrical. 

\medskip

\noindent \emph{\textbf{Logic OR gate}} $\tau = (G_{{\vee}},v_{in_1},v_{in_2},v_{out},nil)$. We will consider three cases depending on the position of the cuts $t_{in_1}^{+}$ and $t_{in_2}^{+}$. First, when $t_{in_1}^{+} + t_{in_2}^{+} < 0.4$ it holds that $t_{out}^{-} \in [-\epsilon', \epsilon]$ and hence by Definition \ref{eq:x}, it holds that $x_{out} \in [0,\epsilon$]. Furthermore, by Inequality \ref{eq:t} it holds that $x_{in_1} + x_{in_2} < 0.4 + 4\epsilon'$ and for $\epsilon' < 1/40$, it also holds that $x_{in_1},x_{in_2} < 0.5$ and the gate constraint is satisfied. Next, when $t_{in_1}^{+} + t_{in_2}^{+} \in [0.4,0.8]$ then by Inequality \ref{eq:t}, it holds that $x_{in_1},x_{in_2}  \in [0.4-\epsilon',0.8+4\epsilon']$ and in particular, when $\epsilon'<1/40$ then it also holds that $x_{in_1}+x_{in_2} \in [0.3,0.9]$ and the output of the gate in unconstrained. Finally when $t_{in_1}^{+} + t_{in_2}^{+} > 0.8$, it holds that $t_{out}^{-} \in [1 - \epsilon, 1 + \epsilon']$ and hence by Definition \ref{eq:x}, we have that $x_{out} \in [1 -\epsilon,1]$. Furthermore, by Inequality \ref{eq:t} we have that $x_{in_1}+x_{in_2} > 0.8 + 4\epsilon'$ which is greater than $0.9$ when $\epsilon' < 1/40$ which implies that at least one of the two inputs is greater than $\epsilon$. In particular, the gate's output lies in $[1-\epsilon,\epsilon]$ when the inputs are smaller than $\epsilon$ or greater than $1-\epsilon$ and at least one of them is greater than $1-\epsilon$. This shows that the gate constraint is satisfied.

\medskip

\noindent \emph{\textbf{Logic AND gate}} $\tau = (G_{{\wedge}},v_{in_1},v_{in_2},v_{out},nil)$. Analogously to the Logical OR gate, we will consider three cases, depending on the position of the cuts $t_{in_1}^{+}$ and $t_{in_2}^{+}$. First, when $t_{in_1}^{+} + t_{in_2}^{+} < 1.4$, it holds that $t_{out}^{-} \in [-\epsilon',\epsilon]$ and by Inequality \ref{eq:x}, we have that $x_{out} \in [0,\epsilon]$. On the other hand, by Inequality \ref{eq:t}, we have that $x_{in_1}+ x_{in_2} < 1.4 + 4\epsilon'$ which is at most $1.5$ for $\epsilon' < 1/40$ and the gate constraint is satisfied for the same reason as in the third case of the Logic OR gate above (using the argument symmetrically, for values smaller than $\epsilon$ instead of at larger than $1-\epsilon$). Next, when $1.4 \leq t_{in_1}^{+} + t_{in_2}^{+} \leq 1.8$, by Inequality \ref{eq:t} and for $\epsilon' < 1/40$, it holds that $x_{in_1}+ x_{in_2} \in [1.3,1.9]$ and the output of the gate is unconstrained. Finally, when $t_{in_1}^{+} + t_{in_2}^{+} > 1.8$, by Inequality \ref{eq:t} and for $\epsilon' < 1/40$, it holds that $x_{in_1}+x_{in_2} > 1.7$. Furthermore, it holds that $t_{out}^{-} \in [1-\epsilon,1+\epsilon']$ and hence by Definition \ref{eq:x}, we have that $x_{out} \in [1-\epsilon,1]$ and the gate constraint is satisfied.

\medskip

\noindent \emph{\textbf{Logic NOT gate}} $\tau = (G_{{\neg}},v_{in},v_{in_2},v_{out},nil)$. We will consider three cases, depending on the location of the cut $t_{in}^{-}$. First, if $t_{in}^{-} < 0.4$, it holds that $t_{out}^{-} \in [-\epsilon',\epsilon]$ and hence by Definition \ref{eq:x}, we have that $x_{out} \in [0,\epsilon]$. Furthermore, by Inequality \ref{eq:t}, it holds that $x_{in} < 0.4 + 4\epsilon'$ which is at most $0.5$ when $\epsilon' < 1/40$ and the gate constraint is satisfied because for any value of the input smaller than $\epsilon$, the output is in $[0,\epsilon]$. Next, when $t_{in}^{-} \in [0.4,0.8]$ and for $\epsilon < 1/40$, by Inequality \ref{eq:t} it holds that $x_{in} \in [0.3,0.7]$ and the output of the gate in unconstrained. Finally, when $t_{in}^{-} > 0.8$, it holds that $t_{out}^{-} \in [1-\epsilon,1+\epsilon']$ and by Definition \ref{eq:x}, we have that $x_{out}  \in [1-\epsilon,1]$. Furthermore, by Inequality \ref{eq:t} and for $\epsilon' < 1/40$, it holds that $x_{in}>0.9$ and the gate constraint is satisfied for a reason analogous to the one described above. 

\medskip

\noindent Given the discussion above, by setting $\epsilon' < \min\{\epsilon/11,1/40\}$\footnote{We can in fact assume some $\epsilon \leq 11/40$, as the smaller the $\epsilon$, the harder the problem is, since we are interested in establishing hardness for some constant $\epsilon$.}, the gate constraints are satisfied, and the vector $(x_i)$ obtained from $\CCC$ is a solution to $\epsilon$-\GC.
\end{proof}

\noindent Now from Lemma \ref{lem:reduc}, we obtain the following result.

\begin{theorem}\label{thm:PPADhard}
There exists a constant $\epsilon'>0$ such that $(n,n,\epsilon')$-\CSH\ is PPAD-hard.
\end{theorem}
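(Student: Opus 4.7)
The plan is to derive the theorem as a short corollary of Lemma \ref{lem:reduc} combined with Lemma \ref{lem:gcppad}. Start by fixing a constant $\epsilon > 0$ for which $\epsilon$-\GC\ is PPAD-complete, whose existence is guaranteed by Lemma \ref{lem:gcppad}, and then set $\epsilon' := \min\{\epsilon/11, 1/40\}/2$ (or any value strictly less than this minimum). Since $\epsilon$ is a fixed constant, so is $\epsilon'$, which is the only constant needed for the statement of the theorem.

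Next, I will argue that the construction preceding Lemma \ref{lem:reduc} is a polynomial-time reduction from $\epsilon$-\GC\ to $(n,n,\epsilon')$-\CSH. Given an instance $S=(V,\TT)$ of $\epsilon$-\GC\ with $N$ nodes and $M$ gates, the reduction produces a \CSH\ instance on $[0,6N]$ with $n = 2N$ agents, where agent $var_i$ has valuation $border_i + G^{\tau}$ (for the unique gate $\tau$ with output $v_i$, if any) and agent $copy_i$ has the fixed step-function valuation described in the excerpt. Each $G^{\tau}$ is a step function with a constant number of pieces, so all valuations can be written down in polynomial time in $|S|$. The assumption that the input and output nodes of any gate are distinct may be enforced with at most $2N$ auxiliary equality gates, at the cost of replacing $\epsilon$ by $\epsilon/2$, which is still a constant.

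Since the constructed instance has $2N$ agents, a Consensus-halving solution with $2N$ cuts exists by \cite{simmons2003consensus}, so $(n,n,\epsilon')$-\CSH\ is a well-defined total search problem on this input. Given any such solution $\CCC$, Lemma \ref{lem:reduc} shows how to extract, in polynomial time, a vector $\xx \in [0,1]^N$ via Equation \ref{eq:x} — simply read off the cut position in each $v_i^{-}$ interval and clip to $[0,1]$ — and guarantees that $\xx$ satisfies every gate constraint of $S$ up to error $\epsilon$. Hence any oracle for $(n,n,\epsilon')$-\CSH\ yields an oracle for $\epsilon$-\GC\ with polynomial overhead, establishing PPAD-hardness.

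There is essentially no remaining obstacle, as all of the nontrivial work — verifying that the step-function gadgets correctly simulate every gate type under the chosen $\epsilon'$ — has already been discharged by Lemma \ref{lem:reduc}. The only care needed is to confirm that the choice $\epsilon' < \min\{\epsilon/11, 1/40\}$ is simultaneously (i) a positive constant independent of $N$, and (ii) small enough to meet the slack requirements of every gate analyzed in the lemma; both are immediate from the inspection of cases in the proof of Lemma \ref{lem:reduc}. Combining these observations with Lemma \ref{lem:gcppad} yields the theorem.
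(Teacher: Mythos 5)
Your proposal follows the same high-level route as the paper (combine Lemma~\ref{lem:reduc} with Lemma~\ref{lem:gcppad}), but it skips a step that the paper's proof explicitly handles and that is actually necessary for the conclusion: the rescaling of the Consensus-Halving instance from $[0,6N]$ back to the fixed domain on which $(n,n,\epsilon')$-\CSH\ is defined.

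The construction in Section~\ref{sec:PPADhard} builds the valuations on $[0,6N]$ purely for notational convenience, and the quantity $\epsilon'<\min\{\epsilon/11,1/40\}$ in Lemma~\ref{lem:reduc} is a bound on the valuation discrepancy \emph{for that expanded instance}, where each agent's total valuation is a fixed constant (e.g.\ $copy_i$ has total value $10$). However, the problem $(n,n,\epsilon')$-\CSH\ is defined over a fixed-length segment (the paper takes $O=[0,1]$). When you map $[0,6N]$ onto $[0,1]$ by compressing by a factor of $6N$ without adjusting the densities, each agent's total valuation shrinks by $\Theta(N)$, so the discrepancy that any partition achieves also shrinks by $\Theta(N)$; the hardness you obtain after this naive rescaling is therefore only for $\epsilon' = O(1/n)$, not for a constant $\epsilon'$, and the theorem as stated would not follow. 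The paper fixes this by multiplying all value densities by $N$ after rescaling, which restores the total valuations to constant size and makes the same constant $\epsilon'$ meaningful on the normalized domain. Your argument as written asserts the constancy of $\epsilon'$ without this correction, so it has a genuine gap even though the remainder of the reasoning is sound.
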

\begin{proof}
Recall that in the proof of Lemma \ref{lem:reduc}, $\epsilon'$ was constrained to be at most $\min\{1/40,\epsilon/11\}$ and in particular by Lemma \ref{lem:gcppad}, there exists a constant $\epsilon'$ that would make the reduction work. Recall however that we ``expanded'' the instance of $(n,\epsilon')$-{\sc{Chalving}} from the interval $[a,b]$ to $[0,6N]$ for convenience, which implies that after rescaling the instance to a constant interval $[a,b]$, the allowed error $\epsilon'$ goes down to $O(1/n)$. To get a constant error $\epsilon'$, we simply multiply all valuations by $N$. 
\end{proof}

\noindent Theorem \ref{thm:PPADhard} implies that although a solution with $n$ cuts is generally desirable, it might be hard to compute, even for a relatively simple class of valuations like those used in the reduction. In fact, we can extend our results to the more general case of finding a partition with $n+k$ cuts where $k$ is a constant.

\begin{theorem}\label{thm:PPADhardconstant}
Let $k$ be any constant. Then there exists a constant $\epsilon'$ such that $(n,n+k,\epsilon')$-{\sc{Chalving}} is PPAD-hard.
\end{theorem}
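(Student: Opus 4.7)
The plan is to reduce from the $(n, n, \epsilon')$-\CSH\ hardness established in Theorem \ref{thm:PPADhard}. Starting from an instance of $\epsilon$-\GC\ with $N$ nodes, I would build on the same construction with $2N$ agents $\{var_i, copy_i\}$ on $[0, 6N]$, then extend the domain to $[0, 6N + 6k]$ and add $k$ new "blocker" agents $B_1, \ldots, B_k$, each supported on a fresh sub-interval $I_j = [6N + 6(j-1),\, 6N + 6j]$. The total instance has $n = 2N + k$ agents, and we allow $n + k = 2N + 2k$ cuts. The goal is to design the blockers so that in any $\epsilon'$-approximate solution: (i) each of the $2N$ original intervals $[d_i, d_i + 3]$ and $[d_i + 3, d_i + 6]$ still contains exactly one cut (as in Lemma \ref{lem:reduc}), and (ii) each blocker interval $I_j$ is forced to contain exactly two cuts, accounting for the remaining $2k$ cuts.

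Forcing two cuts per blocker is the delicate point, since by the intermediate value theorem any single non-atomic agent can always be balanced by a single cut. The way around this is to exploit label parity: if the left and right endpoints of $I_j$ carry the same label $\pm$, then any balanced partition must use an \emph{even} number of cuts inside $I_j$, so a nontrivial blocker valuation rules out $0$ cuts there and forces at least $2$. I would arrange this by choosing each blocker's valuation like the $border_i$ structure (two equal mass bumps on $I_j$) and ensuring the labels at the endpoints of each $I_j$ agree. One clean way is to have the blockers inherit the label parity from the original construction's last cut by placing short "anchor" sub-intervals between them; since the original $2N$ cuts are forced and alternate, the parity of the label entering each $I_j$ is determined by the number of blockers to its left plus a constant, so one can insert a constant-sized parity-fixing gadget (or simply pair the blockers up, handling $k$ even and $k$ odd as two sub-cases) to guarantee matching endpoint labels for each $I_j$.

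Once the cut locations are so constrained, the restriction of the $\epsilon'$-approximate solution to $[0, 6N]$ consists of exactly $2N$ cuts in the original configuration, and the proof of Lemma \ref{lem:reduc} applies verbatim to extract an $\epsilon$-\GC\ solution from the values $t_i^-$ and $t_i^+$. The final rescaling step is identical to the one in Theorem \ref{thm:PPADhard}: the enlargement of the domain from constant length to $6N + 6k = O(N)$ shrinks the effective $\epsilon'$ by a factor of $O(N)$, which is undone by multiplying all valuations by $n = O(N)$, yielding a constant approximation parameter $\epsilon'$.

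The main obstacle is verifying that the extra cuts cannot instead ``leak'' into the original region without violating balance. I would handle this by a case analysis mirroring the one implicit in Lemma \ref{lem:reduc}: any extra cut placed in the interior of a high-value sub-interval such as $v_i^-$, $v_i^+$, $[d_i, d_i+1]$, or $[d_i+2, d_i+3]$ creates an $\Omega(1)$ imbalance for $var_i$ or $copy_i$, while extra cuts pushed arbitrarily close to the boundary of a sub-interval can be absorbed into a slight weakening of $\epsilon'$ by at most $O(k\epsilon')$, which remains a constant. Together with the parity argument guaranteeing that the blocker intervals greedily consume exactly $2k$ cuts, this closes the reduction.
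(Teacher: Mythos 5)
Your approach is genuinely different from the paper's, but it has a gap that you yourself anticipated and, I believe, did not actually close. The paper avoids blockers entirely: it takes the circuit $S$ to be $k+1$ \emph{disjoint identical copies} of a smaller circuit, laid out on disjoint intervals. Each copy has $n/(k+1)$ agents, so each copy's interval needs at least $n/(k+1)$ cuts; with only $n+k$ cuts available, the pigeonhole principle forces some copy to receive \emph{exactly} $n/(k+1)$ cuts, at which point Lemma~\ref{lem:reduc} applies verbatim to that copy and (since the copies are identical) yields a solution to the original circuit. No parity, no leak analysis.

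The gap in your construction is the circularity of the parity argument. You correctly observe that a non-atomic blocker valuation can always be balanced by a \emph{single} cut (intermediate value theorem), so forcing two cuts via value constraints alone is impossible; the only leverage is that if the labels at the two endpoints of $I_j$ agree, then the number of cuts in $I_j$ is even. But those endpoint labels are \emph{consequences} of the cut placement, not something you get to stipulate: the label entering $I_j$ is determined by the parity of the number of cuts to its left, and the adversary controls that. For example, an adversary can place one ``extra'' cut at $6N-\delta$ for tiny $\delta$ (negligible value for any agent, so negligible imbalance), which flips the label entering $I_1$ and lets the blocker there be balanced with a single cut. A ``parity-fixing gadget'' would itself have to be a set of agents whose valuations force a specific number of cuts in a specific region --- which is the very thing we are trying to do and cannot do directly. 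Similarly, the claim that stray cuts in the original region can always be ``absorbed into $O(k\epsilon')$'' is not right as stated: a single extra cut placed near the left endpoint of a high-value bump such as $[d_i,d_i+1]$ flips the label of the remainder of that bump and produces an $\Omega(1)$ imbalance for $var_i$, while \emph{two} extra cuts $\delta$ apart do not flip parity and are indeed absorbable. So the leak analysis needs a parity-aware case distinction that you do not supply, and which in effect reproduces exactly the circularity above. The disjoint-copies construction sidesteps all of this.
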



\begin{proof}
Let $S=(V,\mathcal{T})$ be an instance of $\epsilon$-\GC\ with $N$ nodes, consisting of smaller identical sub-circuits $S_i=(V_i,\mathcal{T}_i)$, for $i=1,2,\ldots,k+1$, with with $N/(k+1)$ nodes each such that for all $i,j \in [k+1]$ such that $i \neq j$, it holds that $V_i \cap V_j = \emptyset$. and $\mathcal{T}_i \cap \mathcal{T}_j = \emptyset$. In other words, the circuit $S$ consists of $k+1$ copies of a smaller circuit $S_i$ that do not have any common nodes or gates. Furthermore, for convenience, assume without loss of generality that for two nodes $l$ and $m$ such that $u_l \in V_i$ and $u_m \in V_j$, with $i < j$, it holds that $l < m$. In other words, the labeling of the nodes is such that nodes in circuits with smaller indices have smaller indices.

Let $H$ be the instance of $(n,n,\epsilon')$-\CSH\ corresponding to the circuit $S$ following the reduction described in the beginning of the section and recall that $n=2N$ in the construction. Note that according to the convention adopted above for the labeling of the nodes, for $i<j$, the agents corresponding to $V_i$ lie in the interval $[\ell_i,r_i]$, whereas the agents corresponding to $V_j$ lie in the interval $[\ell_j, r_j]$ and $r_i \leq \ell_j$. In other words, agents corresponding to sub-circuits with smaller indices are placed before agents with higher indices, and there is no overlap between agents corresponding to different sub-circuits.

Now suppose that we have a solution to $(n,n+k,\epsilon')$-\CSH. Since there is no overlap between valuations corresponding to different sub-circuits, an approximate solution with $n+k$ cuts for the instance $H$ implies that there exists some interval $[\ell_i,r_i]$ corresponding to the set of nodes $V_i$ of sub-circuit $S_i$, such that at least $n/(k+1)$ cuts lie in $[\ell_i,r_i]$, otherwise the total number of cuts on $H$ would be at least $n+k+1$. Since there are exactly $n/(k+1)$ agents with valuations on $[\ell_i,r_i]$, this would imply an approximate solution for $n'$ agents with $n'$ cuts and the problem reduces to $(n,n,\epsilon')$-\CSH. 
\end{proof}

\section{$(n,n-1,\epsilon)$-{\sc{Chalving}} is NP-hard}\label{sec:NPhard}

In the previous section, we proved that the problem of finding an approximate solution with $n$ players and $n$ cuts is PPAD-complete. Recall that for that case, we know that a solution exists \cite{simmons2003consensus}. For $n$ players and $n-1$ cuts however, we don't have the same guarantee. We prove that deciding whether this is the case or not is NP-hard.

\begin{theorem}
	There exists a constant $\epsilon'>0$ such that $(n,n-1,\epsilon')$-\CSH\ is NP-hard.
\end{theorem}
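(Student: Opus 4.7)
The plan is to reduce from \textsc{3SAT}. Given a 3SAT instance $\phi$ with variables $x_1,\dots,x_m$ and clauses $C_1,\dots,C_k$, I would construct an instance of $(n,n-1,\epsilon')$-\CSH\ on an interval $[0,L]$, with $n$ polynomial in $m+k$ and $\epsilon'$ a fixed constant, such that an $\epsilon'$-approximate solution with $n-1$ cuts exists if and only if $\phi$ is satisfiable. The key point is that with one fewer cut than agents the budget is tight: every cut must simultaneously serve a structural purpose, and this rigidity is what will encode the logical constraints.

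First, for each variable $x_i$ I would build a \emph{variable gadget}: a dedicated subinterval populated by a constant number of agents, analogous in spirit to the $\mathit{var}_i$/$\mathit{copy}_i$ pairs used in Section~\ref{sec:PPADhard}. The valuations would be crafted so that, within the cut budget allocated to the gadget, every $\epsilon'$-approximate balancing forces a single cut to lie in one of two \emph{bump} positions, designated \textsc{T} and \textsc{F}. A cut placed between the bumps, or outside them, would unbalance at least one variable agent by a constant, ruling such placements out. Extra ``copy'' agents would propagate the chosen value of $x_i$ to an output slot for every clause in which $x_i$ or $\lnot x_i$ appears, in the same style as the copy mechanism of the PPAD construction.

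Next, for each clause $C_j=\ell_{j_1}\lor\ell_{j_2}\lor\ell_{j_3}$ I would introduce a single \emph{clause agent} whose valuation is supported only on the three literal slots of the copies feeding $C_j$, tuned so that the agent is $\epsilon'$-balanced by the labeling induced from the variable cuts whenever at least one literal is \textsc{T}, but is unbalanced by a constant when all three are \textsc{F}. The parameters would be chosen so that the variable and border cuts already total exactly $n-1$; the clause agents thus receive \emph{no} cut of their own and must be balanced ``for free'' by the labelings propagated from the variable gadgets. A satisfying assignment then yields a valid $(n-1)$-cut partition directly, and conversely any such partition forces each variable gadget into one of its two binary configurations and each clause agent into a balanced state, which together certify a satisfying assignment of $\phi$.

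The main obstacle I anticipate is the cut-budget accounting: I must rule out ``cheating'' placements in which a single cut is shared between a variable gadget and a clause agent, or in which a deficit at some clause is compensated by shifting cuts away from a variable gadget. Controlling this requires both \emph{spatial isolation}, so that the valuations of different gadgets have disjoint supports and no cut can straddle two gadgets, and \emph{height calibration}, so that clause agents have value density sufficiently steep relative to $\epsilon'$ that the only feasible way to balance them is through a correctly oriented literal. Once these two properties are in place, the equivalence between satisfying assignments of $\phi$ and $\epsilon'$-approximate $(n-1)$-cut partitions follows, and NP-hardness is immediate from the NP-hardness of \textsc{3SAT}.
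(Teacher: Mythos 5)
Your reduction from \textsc{3SAT} and the basic idea---using the tight cut budget ($n-1$ cuts for $n$ agents) so that one constraint must be satisfied ``for free''---match the paper. But the way you propose to \emph{implement} this is different, and the variable gadget as described has a genuine gap.

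You ask each variable gadget to force ``a single cut to lie in one of two bump positions, designated \textsc{T} and \textsc{F}'', ruling out intermediate placements. This cannot work with a single cut: for a fixed labelling convention, the imbalance $u_i(O_+)-u_i(O_-)$ of any one agent is a \emph{monotone} continuous function of the position $t$ of a single cut, so the set $\{t : |u_i(O_+)-u_i(O_-)|\le\epsilon'\}$ is always an interval. Intersecting over the gadget's agents still gives an interval. Hence the $\epsilon'$-feasible positions for a single cut form a connected set and can never be two disjoint ``bump'' positions with a gap between them. You would need to spend additional cuts to manufacture a disconnected feasible set, but your budget accounting assumes you do not. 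Likewise, your clause agent (one agent, no cut, supported only on the three literal slots) can only separate the all-false assignment from the seven satisfying ones by a margin of $O(\epsilon')$, not ``by a constant'': if $f_i$ is its value on slot $i$ and $c$ its contribution from fixed-label regions, the balance conditions on the satisfying assignments force $f_i = O(\epsilon')$ and $|c| = O(\epsilon')$, so the all-false imbalance is $O(\epsilon')$ as well. That makes the gadget extremely brittle against the accumulated error from the variable/copy propagation, and you give no accounting of that.

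The paper sidesteps both problems by \emph{not} trying to force the variable cuts to be binary. It reuses the generalized-circuit reduction $R_\epsilon(\cdot)$ of Section~\ref{sec:PPADhard} as a black box: it builds a generalized circuit consisting of (i) unconstrained input nodes $x_i$, (ii) $\text{Bool}(x_i)$ rounding sub-circuits, (iii) a sub-circuit $\Phi$ computing the formula, and (iv) crucially a ``Rebool'' sub-circuit computing $\min(x_{out},\max(x_1,1-x_1),\dots,\max(x_k,1-x_k))$. It then adds exactly \emph{one} extra agent, with no cut, whose valuation is satisfied only when the output of Rebool is close to $1$. Because Rebool is near $1$ only when every $x_i$ is near-Boolean \emph{and} $x_{out}$ is near $1$, the single extra agent simultaneously enforces that the inputs are effectively Boolean and that $\phi$ evaluates to true, without ever needing a gadget that forces a cut into a disconnected set. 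All gate nodes keep their own cuts, so the accumulated error is bounded explicitly ($18m\epsilon'$) and absorbed by the design of the extra agent's valuation. This is the idea your proposal is missing.
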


\begin{proof}
We will first describe the construction that we will use in the reduction. For consistency with the previous section, we will denote the error of the Consensus-halving problem by $\epsilon'$ and the error of the (implict) generalized circuits by functions of $\epsilon$. Let $R_{\epsilon}(S)$ be the construction for the reduction of Section \ref{sec:PPADhard} that encodes an $\epsilon$-generalized circuit $S$ into an $(n,n-1,\epsilon')$-\CSH\ halving instance when $\epsilon' < \epsilon/11$. We will reduce from 3-SAT, which is known to be NP-complete.

\noindent Let $\phi$ be any 3-SAT formula with $m$ clauses, $k\leq 3m$ variables $x_1,\ldots,x_k$, and let $\epsilon>0$ be given. For convenience of notation, let $\delta = \epsilon/11$. We will (implicitly) create a generalized circuit $S$ with the following building blocks:
\medskip
\begin{itemize}
	\item $k$ input nodes $x_1, \ldots, x_k$ corresponding to the variables $x_1,\ldots,x_k$.
	
	\medskip
	
	\item $k$ sub-circuits $\text{Bool}(x_i)$ for $i=1,2,\ldots,k$ that input the real value $x_i \in [0,1]$ and output a boolean value $x_i^{bool} \in \left[0,4\delta\right] \cup \left[1-4\delta,1\right]$ (see the lower stage of Figure \ref{fig:3SAT}). The allowed error for these circuits will be $\delta$. The implementation of the circuit in terms of the gates of the generalized circuit can be seen in Algorithm \ref{alg:bool}. Note that the sub-circuit containing all the $\text{Bool}(x_i)$ sub-circuits has at most $4k$ nodes as each $\text{Bool}(x_i)$ sub-circuit could be implemented with one constant gate, one subtraction gate, one addition gate and one equality gate; the latter is to maintain the convention that all inputs to each gate are distinct.
	
	\medskip
	
	\item A sub-circuit $\Phi(x_1^{bool},\ldots,x_k^{bool})$ that implements the formula $\phi$, inputing the boolean variables $x_i^{bool}$ and outputting a value $x_{out}$ corresponding to the value of the assignment plus the error introduced by the approximate gates. The allowed error for this circuit will be $4\delta$. A pictorial representation of such a circuit can be seen in Figure \ref{fig:3SAT}; note that the circuits $\text{Bool}(x_i)$ are also shown in the picture. This circuit has at most $k+3m$ nodes. First, there might be $k$ possible negation gates to negate the input variables. Secondly, for each clause, in order to implement an $OR$ gate of fan-in $3$, we need $2$ $OR$ gates of fan-in $2$, for a total of $2m$ gates for all clauses. Finally, in order to simulate the $AND$ gate with fan-in $m$, we need $m$ $AND$ gates of fan-in $2$. Overall, since $k \leq 3m$, we need at most $6m$ nodes to implement this sub-circuit, using elements of the generalized circuit.
	
	\medskip
		
	\item A sub-circuit $\text{Rebool}(x_1,\ldots,x_k,x_{out})$ that inputs the variables $x_i$, for $i=1,2,\ldots,k$ and the variable $x_{out}$ and computes the function 
	\begin{ceqn}
	\[
	\min ( x_{out} , \max(x_1, 1-x_1) , \ldots , \max(x_k, 1-x_k)).\]
	\end{ceqn}
	 \noindent The function can be computed using the gates of the generalized circuit as shown in Algorithm \ref{alg:maxmin}. Let $x_{out}^{bool}$ be the output of that sub-circuit with allowed error  $4\delta$. Note that this circuit has at most $16k$ nodes. Each min and max operation requires $8$ nodes and we need to do $2k$ such computations overall; $k$ for the $k$ $\max$ operations and $k$ to implement the $\min$ operation of fan-in $k$ with $\min$ operations of fan-in $2$. Again, since $k \leq 3m$, this sub-circuit requires at most $48m$ nodes in total.
\end{itemize}

\noindent Following the notation introduced above, let $R_{\delta}(\text{Bool})$, $R_{4\delta}(\Phi)$ and $R_{4\delta}(\text{Rebool})$ denote the valuations of the agents in the instance of Consensus-halving corresponding to those sub-circuits, according to the reduction described in Section \ref{sec:PPADhard}. In other words, based on the circuit described above, we create an instance $H$ of Consensus-halving where we have:
\medskip
\begin{itemize} 
	\item $2k$ agents (as each node corresponds to two agents, $var_i$ and $copy_i$) that correspond to the input variables $x_1,\ldots,x_k$, who are not the output of any gate,
	\medskip
	\item at most $2(4k+k+3m+16k)$ nodes corresponding to the internal nodes and the output node of the circuit,
	\medskip
	\item an additional agent with valuation
	\begin{ceqn}
\[
u_n  = \begin{cases}
	1				, \text{if } t \in [b - 18m\epsilon' - 1,b - 18m\epsilon'] \\ 
	1				, \text{if } t \in [b,b+1]\\
	0				, \text{otherwise} \end{cases}\\ \]
\end{ceqn}
\noindent where $[a,b]$ is the interval where the value of $x_{out}^{bool}$ is ``read'' in the instance of Consensus-halving, i.e. the interval where the cut $t_{out}^{bool}{-}$ will be placed in the Consensus-halving solution. 
\end{itemize}	

\noindent Recall Definition \ref{eq:x} from Section \ref{sec:PPADhard} and note that as far as agent $n$ is concerned, any cut $t_{out}^{bool}{-}$ such that $1 - 18m\epsilon \leq x_{out}^{bool} \leq 1$ is a Consensus-halving solution.\\


	\begin{algorithm}[b]
		\caption{Computing $bool(x)$.}
		\label{alg:bool}
		\begin{algorithmic}
			\State $a \gets x - 1/4$
			\State $bool \gets a + a$
		\end{algorithmic}
	\end{algorithm}

\begin{algorithm}[t]
	\caption{Computing $\min(x,y)$ and $\max(x,y)$.}
	\label{alg:maxmin}
	\begin{algorithmic}
		\State $a \gets x-y$\ ; \ \ $b \gets y-x$\ ; \ \ $c\gets a+b$
		\State $d \gets c/2$\ ; \ \ $\ell \gets (x/2)+(y/2)$
		\State $min \gets \ell-d$\ ; \ \ $max \gets \ell+d$
	\end{algorithmic}
\end{algorithm}

\begin{figure}[t!]
  \centering
    \begin{tikzpicture}[scale=0.7]
\draw (0,0) rectangle (1,1) node[pos=.5] {\small{Bool}};
\draw (1.5,0) rectangle (2.5,1) node[pos=.5] {\small{Bool}};
\draw (3,0) rectangle (4,1) node[pos=.5] {\small{Bool}};

\draw (3,1.5) rectangle (4,2.5) node[pos=.5] {$\lnot$};

\draw (1,2) rectangle (2,3) node[pos=.5] {$\vee$};

\draw (3,4) rectangle (4,5) node[pos=.5] {$\vee$};

\draw (6,4) rectangle (7,5) node[pos=.5] {$\vee$};

\draw (4.5,6) rectangle (5.5,7) node[pos=.5] {$\wedge$};

\fill (5.5,2) circle (0.1cm);
\fill (6.5,2) circle (0.1cm);
\fill (7.5,2) circle (0.1cm);

\node (x1) at (0.5,-0.5) {$x_1$};
\node (x2) at (2,-0.5) {$x_2$};
\node (x3) at (3.5,-0.5) {$x_3$};
\draw (x1) -- (0.5,0);
\draw (x2) -- (2,0);
\draw (x3) -- (3.5,0);

\draw (0.5,1) -- (1.5,2);
\draw (2,1) -- (1.5,2);

\draw (3.5,1) -- (3.5,1.5);

\draw (1.5,3) -- (3.5,4);

\draw (3.5,2.5) -- (3.5,4);

\draw (3.5,5) -- (5,6);

\draw (6.5,5) -- (5,6);

\node (xout) at (5,7.5) {$x_{out}$};

\draw (xout) -- (5,7);

\draw (6.5,4) -- (5.5,3);
\draw (6.5,4) -- (7.5,3);


\end{tikzpicture}
    \caption{A generalized circuit corresponding to a $3SAT$ formula $\phi$, where the first clause is $(x_1 \vee x_2 \vee \overline{x_3})$. The nodes of the circuit between different layers are omitted. The layer at the output layer that ``restores'' the boolean values is also not shown, therefore $x_{out}$ is the outcome of the emulated formula $\phi$.
    \label{fig:3SAT}}
\end{figure}
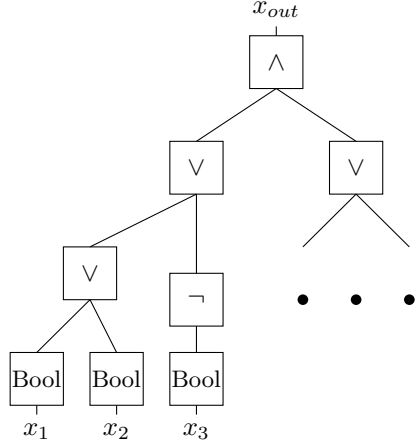

\noindent We will now argue about the correctness of the reduction. Let $n$ be the number of agents and notice that there are $n-1$ agents that correspond to the nodes of the circuit and a single agent constraining the value of $x_{out}^{bool}$. Notice that since the allowed error for the sub-circuit $\text{Rebool}(x_1,\ldots,x_k,x_{out})$ is $4\delta$, the total additive error of the agents of $R_{4\delta}(\text{Rebool})$ will be at most $4\delta \cdot 48m \leq 18m\epsilon'$.

First, assume that there exists a a solution to $\epsilon'$-approximate Consensus-halving with $n-1$ cuts. By the correctness of the construction of Section \ref{sec:PPADhard} and the fact that $\epsilon' < \epsilon/11=\delta$, the solution encodes a valid assignment to the variables of the generalized circuit $S$. Due to the valuation of agent $n$, the output of $C$ must satisfy 

\begin{ceqn}
\[x_{out}^{bool} \geq 1 - 18m\epsilon' - \epsilon',\]
\end{ceqn}
otherwise the corresponding cut $t_{out}^{bool}{-}$ could not be a part of a valid solution. Since the total additive error for the circuit $\text{Rebool}(x_1,\ldots,x_k,x_{out})$ is at most $18m\epsilon'$, if we choose $\epsilon' < 1/90m$, it holds that 

\begin{ceqn}
\[x_{out}^{bool} \geq 4/5 - \epsilon'  \ \ \text{which implies that} \ \ x_{out} \geq 3/4,\]
\end{ceqn}
 by the function implemented by the circuit $\text{Rebool}(x_1,\ldots,x_k,x_{out})$. For the same reason, for each $i=1,\ldots,k$ it holds that 
 
 \begin{ceqn}
 \[x_i \in [0,1/4] \cup [3/4,1]\] 
 \end{ceqn}
 and hence the output of $\text{Bool}(x_i)$ will lie in $[0,4\delta] \cup [1-4\delta,1]$, 
which means that the inputs $x_1^{bool},\ldots,x_k^{bool}$ to the gates of the sub-circuit $\Phi(x_1^{bool},\ldots,x_k^{bool})$ will be treated correctly as boolean values by the gates of the circuit (since the allowed error of the sub-circuit is $4\delta$). Since the circuit $\Phi(x_1^{bool},\ldots,x_k^{bool})$ computes the boolean operations correctly and $x_{out} \geq 3/4$, the formula $\phi$ is satisfiable.
   

\noindent For the other direction, assume that $\phi$ is satisfiable and let $\tilde{x}=(\tilde{x}_1,\ldots,\tilde{x}_k)$ be a satisfying assignment. First we set the values of the  variables $x_1, ..., x_k$ to $0$ or $1$ according to $\tilde{x}$ and then we propagate the values up the circuit $S$ using the exact operation of the gates, which by our construction can be encoded to an instance of exact Consensus Halving for the $(n-1)$ agents corresponding to the nodes of $S$, i.e. the first $n-1$ will be exactly satisfied with the partition resulting from the encoded satisfying assignment. For the $n$-th agent, again, since the total additive error is bounded by $18m\epsilon'$, agent will be satisfied with the solution.
\end{proof}
Again, we remark here that by using similar arguments as in the proof of Corollary \ref{col:PPADprob}, we can prove that the result holds even when the valuations are probability measures. 
\section{$(n,n,\epsilon)$-{\CSH} is in PPA}\label{sec:inPPA}

In this section, we prove that $(n,n,\epsilon)$-\CSH\ is in PPA. As we discussed in the introduction, this result of ours was referenced in \cite{filos2018consensus} to complement the PPA-hardness reduction of the inverse-exponential precision version and obtain PPA-completeness. For establishing this result, we construct a reduction from $(n,n,\epsilon)$-\CSH\ to the PPA-complete problem {\sc Leaf} which goes via $(n,T)$-\tucker, the computational version of Tucker's Lemma. 

\begin{theorem}
	$(n,n,\epsilon)$-\CSH\ is in PPA.
\end{theorem}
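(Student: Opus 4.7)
The plan is to realize $(n,n,\epsilon)$-\CSH\ as an instance of $(n,T)$-\tucker\ on a suitably fine triangulation of the $n$-sphere, and then invoke the PPA-membership of computational Tucker established by Aisenberg et al.~\cite{aisenberg20152}. The essential parameterization, going back to Simmons--Su~\cite{simmons2003consensus}, identifies a cut-and-labelling configuration with a point $x=(x_0,\ldots,x_n)\in S^n$ (with $\sum_{i} x_i^2 = 1$): the lengths of the $n+1$ pieces are given by $x_i^2$, while the label in $\{+,-\}$ assigned to the $i$-th piece is $\sign(x_i)$. Under this encoding the antipodal map $x\mapsto -x$ on $S^n$ corresponds exactly to swapping the sets $O_{+}$ and $O_{-}$, so the discrepancy vector
$$D(x)\;=\;\bigl(u_1(O_{+})-u_1(O_{-}),\ldots,u_n(O_{+})-u_n(O_{-})\bigr)$$
satisfies the antipodal identity $D(-x)=-D(x)$.

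Given this parameterization, I would construct a Tucker-style labelling $\lambda:V\to\{\pm 1,\pm 2,\ldots,\pm n\}$ on the vertices $V$ of a symmetric triangulation $\TT$ of $S^n$ at resolution $T$: for each vertex $v$, compute $D(v)$ and set $\lambda(v)=\sign(D_i(v))\cdot i$, where $i$ is the agent achieving the maximum of $|D_j(v)|$ (with ties broken by smallest index). The antipodal symmetry of $D$ ensures $\lambda(-v)=-\lambda(v)$, the boundary condition required by Fan's version of Tucker's lemma~\cite{fan1967simplicial}. Because the valuations are bounded, non-atomic step functions with succinct descriptions, both $D(v)$ and $\lambda(v)$ are polynomial-time computable from the CSH input, so the resulting labelled triangulation is a bona-fide instance of $(n,T)$-\tucker.

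Applying computational Tucker to this instance returns a simplex whose vertex labels include a complementary pair $\{+i,-i\}$. The two vertices witnessing this pair are close points on $S^n$ whose $i$-th discrepancy coordinates have opposite signs; by continuity of $D$ combined with the choice of $i$ as the \emph{maximum}-magnitude coordinate at each vertex, both $|D_i|$ and all other $|D_j|$ must be uniformly small at one of these two configurations, which translates back into an $\epsilon$-approximate consensus-halving cut pattern. The one technical obstacle I expect is quantitative: I need to pick $T$ large enough (polynomial in $n$, $1/\epsilon$, and the bit-size of the step-function valuations) so that an edge of the triangulation corresponds to a perturbation of the cut-vector small enough to change $D$ by less than $\epsilon$ coordinate-wise. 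This requires a Lipschitz-type estimate on $D$ viewed as a function of $x$, taking into account the $x_i\mapsto x_i^2$ reparameterization near points where some $x_i$ is close to zero; once this is established, the composition $(n,n,\epsilon)$-\CSH\ $\le_p$ $(n,T)$-\tucker\ $\le_p$ \textsc{Leaf} places $(n,n,\epsilon)$-\CSH\ in PPA.
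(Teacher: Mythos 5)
Your overall strategy matches the paper's: encode cut configurations as points on a symmetric domain, define a Tucker labelling via the maximum-magnitude discrepancy coordinate, and recover an approximate consensus-halving from a complementary edge. There are two places, however, where your plan diverges from the paper in ways that matter.

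First, the parameterization. You work on the $\ell_2$-sphere $S^n$ with piece lengths $x_i^2$, following Simmons--Su's original statement. The paper instead uses the boundary of the cross-polytope $C^n = \{\mathbf{x}\in\RR^{n+1}: \sum_i |x_i| = 1\}$, with piece lengths $|x_i|$. This is not cosmetic: the $\ell_1$ parameterization makes the map from a vertex to a cut configuration $1$-Lipschitz, so a mesh size of $\epsilon/2M$ (with $M$ the valuation bound) immediately gives that adjacent vertices change each $u_i$ by at most $\epsilon/2$, and the Lipschitz issue you flag as ``the one technical obstacle'' near $x_i\approx 0$ simply does not arise. Your route can presumably be made to work, but you'd need the uniform estimate on $x\mapsto x^2$ near zero that you anticipate; the paper sidesteps it entirely.

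Second, and more substantively, your final step leans on Aisenberg et al.~\cite{aisenberg20152} to conclude that computational Tucker is in PPA, and hence so is your instance. But the computational Tucker problem in \cite{aisenberg20152} (and in \cite{papadimitriou1994complexity}) is posed on a subdivision of the hypercube, whereas your reduction produces a Tucker instance on a symmetric triangulation of $S^n$ (respectively $C^n$ in the paper). These are different computational problems; a polynomial-time, antipode-preserving translation between them is not off-the-shelf. The paper explicitly notes that such a map was only sketched in \cite{papadimitriou1994complexity} and avoids it by instead giving a direct reduction to \textsc{Leaf} (Lemma~\ref{lem:tuckerppa}), adapting Prescott and Su's~\cite{prescott2005constructive} constructive proof of Fan's lemma on triangulations that are aligned with hemispheres. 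Without either that lemma or an explicit cube--to--sphere reduction, the citation to \cite{aisenberg20152} does not close the argument. So your CSH-to-Tucker half is essentially the paper's (modulo the choice of $\ell_1$ vs.\ $\ell_2$), but the Tucker-to-\textsc{Leaf} half is a genuine gap as written.
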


\begin{proof}
	The result is a corollary of Lemma \ref{thm:inPPA} and Lemma \ref{lem:tuckerppa}, which will be proved in the remainder of the section.
\end{proof}	
Before we proceed, we provide some intuition. In \cite{simmons2003consensus}, Simmons and Su designed an algorithm for finding an approximate solution to the Consensus-Halving problem given access to an algorithm that solves Tucker on a triangulated cross-polytope (the formal definitions are given below). In the proof of Lemma \ref{thm:inPPA} we use this algorithm directly to reduce the computational version of Consensus Halving, $(n,n,\epsilon)$-\CSH, to the computational version of Tucker, $(n,T)$-\tucker, defined below. 

Then, the inclusion of $(n,n,\epsilon)$-\CSH\ in PPA will follow from the the fact that $(n,T)$-\tucker\ is in PPA. This was proven in \cite{papadimitriou1994complexity} where the computational version of the problem is defined on a subdivision of the hypercube, rather than the triangulation of the cross-polytope, as required for the Simmons-Su algorithm \cite{simmons2003consensus}. For the latter problem, a proof was sketched in \cite{papadimitriou1994complexity}, where the idea was to map a triangulated hemisphere continuously and symmetrically to the hypercube. We provide a formal proof here, but following a different approach;\footnote{A map between the two topological spaces or the corresponding solutions seems like the most natural approach, but formalizing the intuition seems like it could result in significant technical clutter. Our approach ``moves'' the technical difficulty to the definition of the ``aligned with hemispheres'' triangulation, which can rather be easily understood intuitively. Then, the two lemmas that establish the result are basically adaptations of the known algorithms of \cite{simmons2003consensus} (Lemma \ref{thm:inPPA}) and \cite{prescott2005constructive} (Lemma \ref{lem:tuckerppa}).} we use an algorithm proposed by Prescott and Su \cite{prescott2005constructive} (for proving a generalization of Tucker's Lemma due to Fan \cite{fan1967simplicial}) and prove that it can be used to recover a solution to $(n,T)$-\tucker\ from a solution to \textsc{Leaf}. The algorithm requires $(n,T)$-\tucker\ to be defined on a specific type of triangulation $T$, which we also define below.

\subsection{The Borsulk-Ulam Theorem and Tucker's Lemma}

Denote the $n$-dimensional Euclidean space by $\RR^n$. The $(n+1)$-dimensional unit ball is $$B^{n+1}=\{ {\bf x}\in \RR^{n+1} : \sum_{i=1}^{n+1} |x_i|^2 \le 1 \}$$ and its surface is the $n$-dimensional sphere $$S^{n}=\{ {\bf x}\in \RR^{n+1} : \sum_{i=1}^{n+1} |x_i|^2 = 1 \}.$$ The $(n+1)$-dimensional cross-polytope is $$P^{n+1}=\{ {\bf x}\in \RR^{n+1} : \sum_{i=1}^{n+1} |x_i| \le 1 \},$$ i.e., unit ball in the $l_1$-norm. Denote the surface of $P^{n+1}$ by $$C^n=\{ {\bf x}\in \RR^{n+1} : \sum_{i=1}^{n+1} |x_i| = 1 \}.$$ The Borsuk-Ulam theorem is the following:

\begin{theorem}[Borsuk-Ulam theorem]\label{thm:bu}
Any continuous function $f : C^n \mapsto \RR^n$ must have a vertex $\mathbf x \in C^n$, such that $f({\bf x})=f(-{\bf x})$.
\end{theorem}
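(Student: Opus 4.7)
The statement is the classical Borsuk--Ulam theorem, phrased on the boundary of the cross-polytope rather than the sphere. The plan is first to reduce it to the spherical formulation, and then to derive the spherical version from the standard nonexistence of antipodal maps $S^n \to S^{n-1}$.

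For the reduction, the radial projection $\phi: C^n \to S^n$ given by $\phi(\mathbf{x}) = \mathbf{x}/\|\mathbf{x}\|_2$ is a homeomorphism satisfying $\phi(-\mathbf{x}) = -\phi(\mathbf{x})$. Hence for any continuous $f: C^n \to \RR^n$, the composition $f \circ \phi^{-1}: S^n \to \RR^n$ is continuous, and the antipodal coincidence points of the two maps are in bijection via $\phi$. So it suffices to prove the theorem on $S^n$.

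For the spherical version I would argue by contradiction: suppose $f(\mathbf{x}) \neq f(-\mathbf{x})$ for every $\mathbf{x} \in S^n$. Then
\[
g(\mathbf{x}) \;:=\; \frac{f(\mathbf{x}) - f(-\mathbf{x})}{\|f(\mathbf{x}) - f(-\mathbf{x})\|_2}
\]
is a well-defined continuous antipodal map $g: S^n \to S^{n-1}$, so it suffices to show no such map exists.

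The main obstacle, which is the substantive topological content, is precisely this nonexistence claim. The cleanest route is a degree argument: any continuous antipodal self-map of $S^n$ has odd degree, but if $g: S^n \to S^{n-1}$ were antipodal, post-composing with the equatorial inclusion $S^{n-1} \hookrightarrow S^n$ would produce an antipodal self-map factoring through $H_n(S^{n-1}) = 0$, forcing degree zero and contradicting oddness. In keeping with the combinatorial flavor of the paper one could alternatively triangulate $S^n$ centrally symmetrically with arbitrarily fine mesh, label each vertex by the signed index of the coordinate of $g$ of largest magnitude, and extract a contradiction from the complementary edge guaranteed by Tucker's Lemma as the mesh shrinks. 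Since the theorem is invoked in the paper purely as a stepping stone toward the subsequent combinatorial results, in practice I would simply cite Borsuk's original proof \cite{borsuk1933drei}.
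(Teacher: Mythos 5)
The paper offers no proof of this theorem: it is stated purely as background, with the citation to Borsuk's 1933 paper for the spherical version, and a one-sentence remark that the cross-polytope formulation is equivalent. Your radial-projection reduction $\phi(\mathbf{x}) = \mathbf{x}/\|\mathbf{x}\|_2$ is exactly the implicit content of that remark, and your subsequent sketch of the spherical case is a correct standard argument (passing to the nonvanishing antipodal map $g$, then invoking the nonexistence of antipodal maps $S^n \to S^{n-1}$). Two minor observations: the degree route offloads the substantive topological work onto the odd-degree theorem for antipodal self-maps of $S^n$, which is itself essentially as deep as Borsuk--Ulam, so it is honest of you to flag it as the ``main obstacle'' rather than a free lemma; and the Tucker's Lemma route you mention as an alternative is closer in spirit to the rest of the paper (and is, in substance, what Lemma~\ref{lem:tuckerppa} and the Prescott--Su path argument formalize). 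Your closing instinct --- that one would simply cite Borsuk --- is precisely what the paper does, so you have correctly gauged the role this statement plays.
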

Note that although the theorem is originally stated on  domain $S^n$, it is also true when the domain of $f$ is the surface of the cross-polytope $C^{n}$.

In general, a \textit{subdivision} or \textit{simplicization} is a partition of a geometric object into small objects such that any two such small objects either share a common facet or do not intersect. In particular, we have the following definitions:

\begin{definition}[Triangulation] \cite{simmons2003consensus}\label{def:triangulation}
A \textit{triangulation} $T$ of a geometric object $X$ is a collection of (distinct) $n$-simplices $\sigma_1, \ldots, \sigma_m$ whose union is $X$ such that for all $i$ and $j$, $\sigma_i \cap \sigma_j$ either contains a common facet of $\sigma_i$ and $\sigma_j$ or a lower dimension face, or is empty. 
\end{definition}
The \emph{mesh size} $\tau$ of a triangulation $T$ is the maximum distance between any two vertices of $T$.

\begin{definition}[Centrally symmetric and $d$-skeleton] \cite{prescott2005constructive}
	A triangulation $T$ of $S^{n}$ is \textit{centrally symmetric}  if given simplex $\sigma \in T \cap S^n$, then $-\sigma \in T \cap S^n$, where $-\sigma$ is the simplex obtained by negating each vertices of $\sigma$. The notion can also be defined on other centrally symmetric objects, e.g., on $C^n$. The $d$-\emph{skeleton} of a triangulation $T$ is the collection of simplices of $T$ of dimension at most $d$, i.e. $\{ \sigma \in T: \text{dim}(\sigma)\leq d\}$.
\end{definition}	

\noindent Tucker's Lemma can be formulated in a number of different ways; one is the following.

\begin{lemma}[Tucker's Lemma \cite{simmons2003consensus}]\label{lem:tucker}
Let $T$ be an centrally symmetric triangulation of $C^{n}$ whose vertices are assigned labels from $\{+1, -1, +2, -2, \ldots,+n, -n\}$. The labels of antipodal vertices sum to zero, i.e., the labelling function $\lambda$ satisfies $\lambda(-{\bf x})=\lambda({\bf x})$ for any vertex ${\bf x} \in C^n$. Then there must exist a 1-simplex (which is referred to as a complementary edge) such that its two vertices have labels that sum to zero.
\end{lemma}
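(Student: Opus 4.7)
The plan is to prove Tucker's Lemma by a constructive path-following argument in the spirit of the classical combinatorial proof of Sperner's Lemma, with the central symmetry of $T$ and the antipodal labelling condition used to handle the interface between the two hemispheres of $C^n$. I would proceed by contradiction, assuming no complementary edge exists, and derive a parity contradiction, with an induction on the dimension $n$ supplying the boundary bookkeeping.

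For the combinatorial skeleton, I would designate certain $(n-1)$-simplices of $T$ as \emph{doors} and certain $n$-simplices as \emph{happy}, following a Freund--Todd-style convention: an $(n-1)$-face is a door when its $n$ vertex labels realize a prescribed ``almost complementary'' pattern (exhausting an appropriate initial segment of the signed labels $+1,-1,+2,-2,\ldots$), and an $n$-simplex is happy if it contains at least one door as a face. Under the no-complementary-edge assumption, a short case analysis on the label of the ``missing'' vertex in a happy simplex shows that each happy $n$-simplex has at most two door faces. Hence the graph $G$ whose vertices are happy $n$-simplices and whose edges join simplices sharing a door face has maximum degree $2$, and so is a disjoint union of paths and cycles.

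Next I would analyse the endpoints of paths in $G$. Every interior door is shared by exactly two $n$-simplices, so any degree-one vertex of $G$ must correspond to a happy $n$-simplex one of whose door faces lies on the equator $E = \{\mathbf{x}\in C^n : x_{n+1}=0\}$. The antipodal hypothesis $\lambda(-\mathbf{x})=-\lambda(\mathbf{x})$ makes $E$ itself a centrally symmetric triangulated copy of $C^{n-1}$ whose inherited labelling satisfies the hypotheses of the $(n-1)$-dimensional Tucker's Lemma. Restricting $G$ to the upper hemisphere $C^n_+=\{\mathbf{x}\in C^n : x_{n+1}\geq 0\}$, I would count the degree-one vertices of the restricted graph in two ways: they must be even in number (paths have two endpoints), yet by the inductive hypothesis applied on $E$ together with the door-count bookkeeping the number of equatorial door entries into $C^n_+$ is forced to be odd, giving the desired contradiction.

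The base case $n=1$ is direct: $C^1$ is a combinatorial circle and the antipodal condition forces both labels $+1$ and $-1$ to appear, so any sign change along the circle yields a complementary edge. The main obstacle will be pinning down the door definition so that both the local ``at most two doors per happy simplex'' property and the equatorial parity count align cleanly with the inductive hypothesis; this is precisely the bookkeeping carried out by the Freund--Todd pivoting scheme, and any clean proof essentially amounts to executing this bookkeeping without error.
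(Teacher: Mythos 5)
The paper does not prove this lemma at all; it states it with a citation to \cite{simmons2003consensus} and uses it as a black box, so there is no paper proof to compare against. The constructive argument the paper \emph{does} carry out (in the proof of Lemma~\ref{lem:tuckerppa}) is the Prescott--Su path-following proof of Fan's lemma, which is the closest thing in the paper to what you are attempting, and comparing against it reveals where your sketch breaks down.

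The central gap is in your inductive step. You claim the equator $E=\{\mathbf{x}\in C^n : x_{n+1}=0\}$, viewed as a copy of $C^{n-1}$ with the inherited labelling, ``satisfies the hypotheses of the $(n-1)$-dimensional Tucker's Lemma.'' It does not: vertices on $E$ are labelled from $\{\pm 1,\dots,\pm n\}$, and nothing forbids a vertex of $E$ from carrying label $\pm n$. So you have a labelling of $C^{n-1}$ into $\pm\{1,\dots,n\}$, which is exactly the regime where Tucker's lemma is silent and a complementary edge need not exist. What you actually need as an inductive invariant is a \emph{parity} statement about the equatorial ``doors'' (this is Ky Fan's lemma: an odd number of alternating simplices, with labels drawn from $\pm\{1,\dots,m\}$ with $m\ge n$), not Tucker's lemma itself. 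This is precisely why the Freund--Todd/Prescott--Su proofs do not recurse into a single equator $S^{n-1}$ but instead work through an entire flag of hemispheres $H_0\subset H_1\subset\dots\subset H_n$, tracking ``agreeable alternating'' $d$-simplices at each level $d$; the degree-one endpoint at $H_0$ seeds the path and the other endpoint is the desired simplex. You cannot avoid this multi-level bookkeeping by a one-equator induction, because the natural inductive hypothesis is the wrong statement.

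There is a second, smaller gap: the lemma as stated only assumes a centrally symmetric triangulation of $C^n$, with no requirement that the equator (or any hemisphere flag) be a subcomplex. A generic centrally symmetric triangulation can have $n$-simplices straddling $\{x_{n+1}=0\}$, in which case $E$ is not a triangulated $C^{n-1}$ at all and ``doors on the equator'' are undefined. The paper handles this by introducing the ``aligned with hemispheres'' condition in Section~\ref{sec:inPPA}; any path-following proof has to either impose that condition or reduce to it by refinement, and your sketch should say which. Your base case $n=1$ and the ``at most two doors per happy simplex'' degree bound are fine, but to make the proposal work you need to (i) replace the inductive target with a Fan-type parity count over a flag of hemispheres, and (ii) state and use an alignment hypothesis on $T$.
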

As pointed out in the literature \cite{simmons2003consensus,prescott2005constructive}, the lemma is often stated for a triangulation of a ball, but for it also holds for a triangulation of a sphere (obtained by gluing two $n$-balls along their boundaries, see \cite{simmons2003consensus} or \cite{prescott2005constructive} for more details). We are interested in triangulations that satisfy the following property.

\begin{definition}[Aligned with hemispheres Triangulation]\cite{prescott2005constructive}
	A {\em flag of hemispheres} in $S^n$ is a sequence $H_{0} \subset \dots \subset H_{n}$ where each $H_d$ is homeomorphic to a $d$-ball, and for $1\le d \le n$, $\partial H_d = \partial (-H_d)=H_d \cap (-H_d)=H_{d-1} \cup (-H_{d-1}) \approxeq S^{d-1}$, $H_n \cap (-H_n)=S^n$, and $H_0, -H_0$ are antipodal points. A symmetric triangulation $T$ of $S^n$ is said to be {\em aligned with hemispheres} if we can find a flag of hemispheres such that $H_d$ is contained in the $d$-skeleton of the triangulation. 
\end{definition}
Note that the same definition can be adapted to our context by converting $l_2$-norm to $l_1$-norm.
We can now define the computational version of the Tucker problem.
\begin{problem}{$(n,T)$-{\tucker}}. Let $T$ be a fixed centrally symmetric triangulation of mesh size $\tau$ (that is aligned with hemispheres). Given an integer $n$ and a polynomial-size circuit $\lambda$ that computes for each vertex $x$ on $T$ a label $\lambda: T \rightarrow\{+1, -1, \cdots, +n, -n\}$ such that $\lambda(-{\bf x}) + \lambda({\bf x}) = 0$, find two adjacent vertices ${\bf z}$, ${\bf z}'$ of $T$, with $\lambda({\bf z}) + \lambda({\bf z}') = 0$.
\end{problem}	
By ``fixed'' here we mean that the circuit $\lambda$ knows how the vertices of the triangulation are represented as inputs and can produce a label for them. Such a triangulation can be constructed, e.g. see \cite{freund1981constructive}. Note that the triangulation $T$ is not given explicitly as input to the problem, but it is rather accessed via the labelling circuit, therefore solutions that exhaustively search over the vertices of $T$ for complementary edges are not efficient.

\subsection{From Consensus-halving to Tucker}
First, we establish the reduction from Consensus-Halving to Tucker. 

\begin{lemma}\label{thm:inPPA}
	$(n,n,\epsilon)$-\CSH\ reduces to $(n,T)$-\tucker\ in polynomial time, where $T$ is a symmetric triangulation $T$ of $C^n$ with mesh size at most $\epsilon/2M$ and $M$ is the upper bound on the valuation functions of $(n,n,\epsilon)$-\CSH.
\end{lemma}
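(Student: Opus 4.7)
My plan follows the Simmons--Su approach, formalized as a reduction to the computational Tucker problem.

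The first step is to set up a natural correspondence between points on $C^n \subset \RR^{n+1}$ and labeled partitions of $[0,1]$ with at most $n$ cuts. For a point $\xx = (x_1,\ldots,x_{n+1})$ with $\sum_j |x_j| = 1$, I place cuts at the partial sums $t_j = \sum_{k \le j} |x_k|$ for $j=1,\ldots,n$, giving $n+1$ consecutive pieces, and assign the $j$-th piece the label $\sign(x_j) \in \{+,-\}$ (coincident cuts collapse, which is already the convention of Section~\ref{sec:preliminaries}). The key structural feature is antipodal symmetry: the map $\xx \mapsto -\xx$ preserves the cuts but globally swaps $O_+$ and $O_-$.

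Next I define, for each agent $i \in N$, the discrepancy $f_i(\xx) = u_i(O_+) - u_i(O_-)$, so that $f_i(-\xx) = -f_i(\xx)$. For a vertex $\xx$ of the triangulation $T$, let $i^\ast(\xx) = \arg\max_i |f_i(\xx)|$ (breaking ties by smallest index) and set
\[
\lambda(\xx) = \sign\bigl(f_{i^\ast(\xx)}(\xx)\bigr) \cdot i^\ast(\xx).
\]
Since $|f_i|$ is even in $\xx$, the argmax is preserved under negation, while the sign flips, so $\lambda(-\xx) = -\lambda(\xx)$, which is the antipodality hypothesis required by $(n,T)$-\tucker. (If $f_{i^\ast}(\xx)=0$, then every $f_i(\xx)=0$ and $\xx$ is already an exact solution, so we may assume otherwise.) Because the valuations are presentable as step functions (or, more generally, any representation that allows polynomial-time evaluation of $u_i$ on an interval), each label $\lambda(\xx)$ can be computed in polynomial time, so the labeling circuit required by $(n,T)$-\tucker\ is of polynomial size.

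The last step is to invoke the $(n,T)$-\tucker\ oracle on this labeling to obtain a complementary edge $(\mathbf{z},\mathbf{z}')$ with $\lambda(\mathbf{z}) = +i$, $\lambda(\mathbf{z}') = -i$ for some agent $i$, and to extract an approximate solution. The central quantitative claim is a Lipschitz bound for $f_i$: for any two adjacent vertices of $T$, each cut position $t_j$ moves by at most the mesh size $\tau$ (since $\bigl||x_k|-|x_k'|\bigr| \le |x_k - x_k'|$), and the valuation density is bounded by $M$, so $|f_i(\mathbf{z}) - f_i(\mathbf{z}')| \le 2M\tau \le \epsilon$. Combined with $f_i(\mathbf{z}) \ge 0 \ge f_i(\mathbf{z}')$, this yields $|f_i(\mathbf{z})| \le \epsilon$, and by the argmax definition of $\lambda$, $|f_j(\mathbf{z})| \le |f_i(\mathbf{z})| \le \epsilon$ for every agent $j$. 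Hence the partition associated to $\mathbf{z}$ is an $\epsilon$-approximate Consensus-halving solution. The part I expect to be the main subtlety is precisely the Lipschitz step: when moving across an edge of $T$ on which some coordinate $x_j$ changes sign, the labels of piece $j$ switch between $+$ and $-$ in the two partitions, so the naive ``same labels, slightly shifted cuts'' argument does not apply directly. The saving fact is that a sign change of $x_j$ on a short edge forces $|x_j|$ itself to be at most $\tau$, so the piece whose label flipped contributes at most $M\tau$ to the change in $f_i$; accounting for all such pieces and the shifted cuts still gives the desired $2M\tau$ bound, and this is the computation I would carry out carefully.
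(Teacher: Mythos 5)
Your proposal matches the paper's proof essentially step for step: the same correspondence between points of $C^n$ and labeled partitions (coordinates as piece-lengths, signs as labels), the same argmax-discrepancy labeling with the antipodality check, and the same Lipschitz-plus-complementary-edge argument to extract an $\epsilon$-approximate solution from the Tucker oracle. You are right to flag the sign-change subtlety in the Lipschitz step — the paper's own proof glosses over it by simply asserting that the portions ``differ by at most $\tau$ in length'' — and your proposed repair (a sign flip across a short edge forces the corresponding coordinate, and hence the flipped piece's length, to be $O(\tau)$) is the correct way to make that bound rigorous.
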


\begin{proof}
	Given any instance of $(n,n,\epsilon)$-\CSH, we construct an instance of $(n,T)$-\tucker\ based on the construction in \cite{simmons2003consensus}. We note that the coordinates of any vertex ${\mathbf x} \in C^n$ naturally correspond to a partition that uses $n$ cuts on the $[0,1]$ interval.\footnote{The use of the $[0,1]$ interval is for convenience and without loss of generality; for any choice of the interval we could use a cross-polytope corresponding to a sphere of a different radius.} This is because the coordinates of any vertex ${\mathbf x} \in C^n$ satisfy $\sum_{i=1}^{n+1} |x_{i}|=1$, and a partition with $n$ cuts on $[0,1]$ can be interpreted as partitioning the interval into $n+1$ pieces such that the length of each piece is equal to $|x_i|, i=1,\ldots,n+1$. Furthermore, if the sign of the $i$-th coordinate $x_i$ is ``+'', piece $|x_i|$ is assigned to portion $O_{+}$; otherwise it is assigned to portion $O_{-}$. This interpertation is the basic bulding block of the Simmons-Su algorithm \cite{simmons2003consensus}.
	
	Let $T$ be a fixed triangulation of $C^n$, in the sense described above with mesh size $\tau=\epsilon/2M$, where $M$ is the bound of the valuation function on any subinterval. It is not hard to verify that for any two adjacent vertices $\mathbf{x}$ and $\mathbf{x}'$ (denote their associated portions by $O_{+}$, $O_{-}$, and $O'_{+}$,$O'_{-}$, respectively), it holds that
	\begin{ceqn}\[|v_i(O_{+}) - v_i(O'_{+})| \leq \epsilon /2, \ \ \text{for all}\ \  i \in [n]. \] \end{ceqn}
	To see this, note that two adjacent vertices in the triangulation $T$ can only differ by at most $\tau$ in distance, which means that the cuts corresponding to the two points $\mathbf{x}$ and $\mathbf{x}'$ will define intervals that are close to each other; in particular the intervals defining the portions $O_{+}$ and $O'_{+}$ will differ by at most $\tau$ in length. Since the difference in value for the two intervals could be at most $M$, the total difference in value could be at most $M\cdot \tau$ which is upper bounded by $\epsilon/2$, by the choice of $\tau$. Symmetrically, we also get that 
	\begin{ceqn}\[|v_i(O_{-}) - v_i(O'_{-})| \leq \epsilon /2, \ \ \text{for all}\ \  i \in [n]. \] \end{ceqn}
	
	We now label all the vertices on $T \cap C^n$. For any vertex $\mathbf x \in T \cap C^n$, denote 
	\begin{ceqn}\[l=\arg\max_i \{ |v_i(O_{+}) - v_i(O_{-})| \},\]\end{ceqn}
	and then label  $\mathbf x$ by $+l$ if $v_l(O_{+}) - v_l(O_{-}) > 0$;  label $\mathbf x$ by $-l$ if $v_l(O_{+}) - v_l(O_{-}) < 0$.
	Note that in case $v_l(O_{+}) - v_l(O_{-})= 0$ then $\mathbf x$ corresponds to a  solution of $(n,n,\epsilon)$-\CSH. We claim that this labeling satisfies the boundary condition of Lemma \ref{lem:tucker}. In summary, given an instance of $(n,n,\epsilon)$-\CSH, we have constructed an instance of $(n,T)$-\tucker.
	
	Now, given a solution to $(n,T)$-\tucker, i.e., two adjacent vertices $\bf z$ and ${\bf z}'$ that are assigned labels which sum to $0$,  assume without loss of generality that ${\bf z}$ (with associated portions $O_{+}$ and $O_{-}$) is labelled by $k$ and ${\bf z}'$ (with associated portions $O'_{+}$ and $O'_{-}$) is labelled by $-k$. According to the labelling, it holds that $v_k(O_{+}) - v_k(O_{-}) >0$ and $v_k(O'_{+}) - v_k(O'_{-}) <0$. Therefore, for all $i \in [n]$ it holds that 
	\begin{eqnarray*}
		| v_i(O_{+}) - v_i(O_{-}) | \le | v_k(O_{+}) - v_k(O_{-}) | &\le&  | (v_k(O_{+}) - v_k(O_{-})) - (v_k(O'_{+}) - v_k(O'_{-})) | \\&=& | (v_k(O_{+}) - v_k(O'_{+})) - (v_k(O_{-}) - v_k(O'_{-})) | \\&\le& | v_k(O_{+}) - v_k(O'_{+}) | + | v_k(O_{-}) - v_k(O'_{-}) | \\&\le& \frac{\epsilon}{2}+\frac{\epsilon}{2}=\epsilon.
	\end{eqnarray*}
	This means that the partition corresponding to the point ${\bf z}$ is a valid solution to $(n,n,\epsilon)$-\CSH.
\end{proof}

\subsection{From Tucker to Leaf}

\noindent We now proceed to the last step of the proof, establishing the reduction from $(n,T)$-\tucker\ to {\sc Leaf}. {\sc Leaf} is the prototypical problem of the class PPA, based on which the class is actually defined \cite{papadimitriou1994complexity}.

\begin{problem}{\sc Leaf}
	
	\textbf{Input:} A boolean circuit C with $n$ inputs and at most $2n$ outputs, outputting the set $\mathcal{N}(y)$ of (at most two) neighbours of a vertex $y$, such that $|\mathcal{N}(0^n)|=1$.

	\textbf{Output:} A vertex $x$ such that $x \neq 0^n$ and $|\mathcal{N}(x)|=1$.
\end{problem}	
Although it is not necessary for the results of the present section, for completeness, we also define the {\sc End-of-Line} problem, based on which the class PPAD is defined.

\begin{problem}{\sc End-of-Line}

	\textbf{Input:} Two boolean circuits $S$ (for successor) and $P$ (for predecessor) with $n$ inputs and $n$ outputs such that $P(0^n)=0^n \neq S(0^n)$.

\textbf{Output:} A vertex $x$ such that $P(S(x)) \neq x$ or $S(P(x))\neq x \neq 0^n$.
\end{problem}	
We have the following lemma. Note that the mesh size of the triangulation is not a parameter in the statement and the reduction holds for any such size. The lemma actually reduces $(n,T)$-\tucker\ to \textsc{Leaf}, for any fixed centrally symmetric triangulation that is aligned with hemispheres, and in particular any such triangulation of the mesh size needed for Lemma \ref{thm:inPPA} (which clearly holds if $T$ is aligned with hemispheres).\footnote{Strictly speaking, Lemma \ref{lem:tuckerppa} states that all computational problems $(n,T)$-\tucker, parametrized by a fixed triangulation $T$ are in PPA, as long as $T$ is centrally symmetric and aligned with hemispheres.}

\begin{lemma}\label{lem:tuckerppa}
	Let $T$ be a fixed centrally symmetric triangulation that is aligned with hemispheres. Then $(n,T)$-\tucker\ is in PPA. 
\end{lemma}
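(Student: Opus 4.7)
The plan is to construct a polynomial-time many-one reduction from $(n,T)$-\tucker\ to {\sc Leaf}, closely following the constructive algorithm that Prescott and Su designed for Fan's lemma. The key idea is to exhibit an undirected graph of maximum degree two, implicitly represented by polynomial-size circuits, whose single ``known'' leaf is a canonical starting simplex and whose every other leaf corresponds to a complementary edge of $T$.

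First I would exploit the aligned-with-hemispheres structure: let $H_{0} \subset H_{1} \subset \cdots \subset H_{n}$ be the flag of hemispheres guaranteed by the hypothesis, so each $H_{d}$ sits inside the $d$-skeleton of $T$ and is triangulated as a $d$-ball, with $\partial H_{d}=H_{d-1}\cup(-H_{d-1})$. Write $H_{0}=\{v\}$ with $\lambda(v)=+1$ (otherwise swap the role of $v$ and $-v$ and negate the labels). Call a $d$-simplex $\sigma\subset H_{d}$ (for $1\le d\le n$) \emph{almost-complementary} if, after removing one vertex $w\in\sigma$, the labels of the remaining $d$ vertices are exactly $\{+1,-2,+3,\ldots,(-1)^{d-1}d\}$; the tagged pair $(\sigma,w)$ becomes a vertex of our {\sc Leaf} graph $G$. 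Call $(\sigma,w)$ \emph{terminal} if in addition $\sigma$ contains two vertices whose labels sum to zero; terminal pairs will be the non-trivial leaves.

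Next I would define edges in $G$ via the two Prescott--Su pivot moves. Given a non-terminal $(\sigma,w)$: the \emph{swap move} keeps the $(d-1)$-face $\tau=\sigma\setminus\{w\}$ fixed and replaces $w$ by the unique other vertex $w'$ such that $(\sigma',w')$ is also almost-complementary, where $\sigma'$ is the $d$-simplex in $H_{d}$ sharing the face $\tau$ with $\sigma$; the \emph{extension move} views $\sigma$ itself as the distinguished face of a $(d+1)$-simplex $\rho\subset H_{d+1}$ whose additional vertex carries the label $(-1)^{d}(d+1)$, yielding an almost-complementary $(d+1)$-simplex. Because $H_{d}$ is a manifold with boundary, every $(d-1)$-face of $H_{d}$ is shared by at most two $d$-simplices; a standard case analysis on whether the ``free'' label equals $(-1)^{d}(d+1)$ or coincides with one of $\{\pm 1,\ldots,\pm d\}$ shows that every non-terminal, non-initial vertex of $G$ has degree exactly two, while the unique initial $1$-simplex in $H_{1}$ containing $v$ has degree one. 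Leaves other than this initial vertex are therefore either terminal (and exhibit a complementary edge in $T$) or sit on $\partial H_{n}$, where the aligned-with-hemispheres property and central symmetry let us reflect the pivot into $-H_{d}$ and continue, so the only way the walk can halt on the boundary is at a complementary edge.

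Finally I would verify efficiency. The triangulation $T$ is fixed (e.g.\ the Freund-type triangulation of \cite{freund1981constructive}), so vertices and their local adjacencies in $T$ admit polynomial-size encodings and polynomial-time neighbor queries; composing these with the polynomial-size labeling circuit $\lambda$ yields a polynomial-size circuit $C$ that, on input a tagged almost-complementary pair, outputs its (at most two) neighbours in $G$, together with a distinguished encoding of the initial vertex as $0^{n}$. This is precisely an instance of {\sc Leaf}, and by construction any solution other than $0^{n}$ decodes to a terminal pair, hence to two adjacent vertices of $T$ whose labels sum to zero. The main obstacle in the proof will be the boundary bookkeeping in the previous paragraph: making sure that the swap move across $\partial H_{d}$ is matched exactly once with its antipodal counterpart so that no spurious odd-degree vertex is introduced, which is where the centrally-symmetric and aligned-with-hemispheres assumptions are genuinely used and where the adaptation of Prescott--Su requires the greatest care.
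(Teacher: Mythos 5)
Your high-level plan---reducing $(n,T)$-\tucker\ to {\sc Leaf} by walking the Prescott--Su path and reading off the other leaf as a complementary edge---is exactly the route the paper takes. But the concrete definitions you use do not match Prescott--Su, and this is where a real gap opens up. You define an ``almost-complementary'' $d$-simplex by requiring that the $d$ remaining labels be \emph{exactly} $\{+1,-2,+3,\ldots,(-1)^{d-1}d\}$, and you let the extension move only fire when the new vertex carries the very specific label $(-1)^{d}(d+1)$. Prescott--Su work instead with \emph{alternating} simplices, whose labels are required only to have distinct magnitudes that alternate in sign when sorted (so $\{+1,-3\}$ or $\{-2,+5,-7\}$ are allowed), together with \emph{almost-alternating} simplices and an ``agreeable'' condition that compares the sign of a simplex (the sign of its smallest label) to the sign of its carrier hemisphere. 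That flexibility is not cosmetic: it is what guarantees that every non-leaf node of their graph $G$ has degree exactly two. Under your rigid labeling, the start node can fail to have any neighbour at all. For instance, $H_0=\{v\}$ with $\lambda(v)=+1$ has a unique neighbouring vertex $u$ on $H_1$; if $\lambda(u)=+3$, then your only node is $(\{v,u\},u)$, the swap move is unavailable (there is no second $1$-simplex of $H_1$ containing $v$), and the extension move requires the unique $2$-simplex of $H_2$ containing $\{v,u\}$ to have its third vertex labeled $-2$, which need not hold. You then have an isolated degree-zero node at the ``known leaf,'' which is not a valid {\sc Leaf} instance and does not trace a path to any complementary edge. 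Finally, your proposal to relabel so $\lambda(v)=+1$ by ``swapping $v$ and $-v$ and negating'' does not achieve this if $|\lambda(v)|\ne 1$; and the boundary accounting (``reflect the pivot into $-H_d$ and continue'') is where Prescott--Su's sign/agreeable machinery is doing the real work, so it cannot be left as a remark. The fix is to adopt the genuine Prescott--Su node set (agreeable alternating $(d-1)$-simplices, agreeable almost-alternating $d$-simplices, and alternating $d$-simplices carried by $H_d$) and their adjacency criterion, which is what the paper does.
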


\begin{proof}
	We reduce $(n,T)$-\tucker\ to the PPA problem {\sc Leaf}, by using the proof by Prescott and Su \cite{prescott2005constructive}. In \cite{prescott2005constructive}, the authors present a constructive proof of Fan's combinatorial lemma on labelings of triangulated spheres. Fan's lemma says that given a symmetric barycentric subdivision of the octahedral subdivision of the $n$-sphere $S^n$ and a labelling from its vertices to $\{ \pm1, \ldots, \pm m\}$, where $m\ge n+1$, such that labels at antipodal vertices sum to 0 and labels at adjacent vertices do not sum to 0, then there are an odd number of $n$-simplices whose labels are of the form
	\begin{ceqn}
	 \[\{ k_0, -k_1, k_2, \ldots, (-1)^n k_n \}, \ \ \text{where} \ \ 1\le k_0 < k_1 < \ldots < k_n \le m.\] 
	 \end{ceqn}
	 Their proof generalizes Fan's lemma in the sense that the result holds for a larger class of triangulations of $S^n$, that is, the lemma holds for any symmetric triangulation of $S^n$ that is aligned with hemispheres. Following their proof, we can start from any place on the sphere and construct a path whose endpoint is a simplex of the highest dimension of the form $\{ k_0, -k_1, k_2, \ldots, (-1)^n k_n \}$. Fan's lemma is a generalization of Tucker's lemma in the sense that if fewer labels are allowed, i.e., $m=n$, then inevitably there will be adjacent vertices labeled by opposite colors.
	We sketch the proof of \cite{prescott2005constructive} and show how it can be converted into a reduction from $(n,T)$-\tucker\ to {\sc Leaf}.
	
	Given a triangulation aligned with hemispheres, we label the vertices of the triangulation as stated in Fan's lemma. The \emph{carrier hemisphere} of a simplex $\sigma$ in $T$ is the minimal $H_d$ or $-H_d$ that contains $\sigma$. A simplex is {\em alternating} if its vertex labels are distinct in magnitude and alternate in sign when arranged in monotone order of magnitude, i.e., the labels have the form 
	\begin{ceqn}
	\[\{ k_0, -k_1, k_2, \ldots, (-1)^n k_n \} \ \ \text{or} \ \ \{ -k_0, k_1, -k_2, \ldots, (-1)^{n+1} k_n \}.
	\] 
	\end{ceqn}
	The {\em sign} of an alternating simplex is the sign of its smallest label.  A simplex is {\em almost-alternating} if it is not alternating, but by deleting one of the vertices, the resulting simplex (a facet) is alternating. The sign of an almost-alternating simplex is defined to be the sign of any of its alternating facets. Thus any almost-alternating simplex must have exactly two facets that are alternating. Call an alternating or almost-alternating simplex {\em agreeable} if the sign of that simplex matches the sign of its carrier hemisphere. 
	
	Now we define a graph $G$. A simplex $\sigma$ carried by $H_d$ is a node of $G$ if it is one of the following: (1) an agreeable alternating $(d-1)$-simplex; (2) an agreeable almost-alternating $d$-simplex; or (3) an alternating $d$-simplex. Two nodes $\sigma$ and $\eta$ are adjacent in $G$ if all of the following hold: (a) one is the facet of the other, (b) $\sigma \cap \eta$ is alternating, and (c) the sign of the carrier hemisphere of $\sigma \cup \eta$ matches the sign of $\sigma \cap \eta$. Prescott and Su show that $G$ is a graph in which every vertex has degree 1 or 2. Furthermore, a vertex has degree 1 if and only if its simplex is carried by $\pm H_d$ or is an $n$-dimensional alternating simplex.
	
	To see how Fan's lemma implies Tucker's lemma and how Prescott and Su's proof can be converted to a reduction to {\sc Leaf}, we now restrict our attention to the case when $m=n$.  Since there are not enough labels for the existence of alternating $n$-simplices, there must exist some agreeable almost-alternating simplices with a complementary edge. We add those simplices as nodes to the graph $G$; it is easy to see that these nodes will be of degree 1. Since the path considered in \cite{prescott2005constructive} can start from any vertex, we can choose any vertex $H_0$ as the given degree 1 node in graph $G$. Following the path, the other degree 1 node in $G$ corresponds to the almost-alternating simplex with a complementary edge. The graph $G$ clearly only contains degree nodes of degree 1 or 2. 
\end{proof}

\section{Conclusion and Future Work} \label{sec:conclusion}

Our work takes an extra step in the direction of capturing the exact complexity of the Consensus-halving problem for all precision parameters. While we believe that the techniques used in \cite{filos2018consensus} can prospectively be used to obtain PPA-hardness of the problem for an inverse-polynomial precision parameter, it seems unlikely that they could applicable when the precision is constant. 
In that sense, our main result is not implied by \cite{filos2018consensus}, neither can it be subsumed by modifications to their reduction, even those involving highly non-trivial alterations. At the same time, our results are useful as the hardness for constant precision established here allowed the authors of \cite{filos2018consensus} to prove PPAD-hardness of the Necklace Splitting problem, for which any hardness results were not previously known.

Going beyond the Consensus-Halving problem in its original definition, it looks interesting to consider further the effects of allowing additional cuts, and the computation of exact or approximate solutions. Concretely, one might wonder how good an approximate Consensus-Halving solution can be computed in polynomial time, given, say, two cuts for each agent or how many cuts are required to produce an approximate solution for a given precision parameter. Finally, we could consider a \emph{query model}, where the agents interact with the protocol by answering value or demand queries and the goal is to find bounds on the number of queries needed to compute approximate solutions. A very similar approach has been taken recently for the related fair-division problem of envy-free cake cutting \cite{branzei2017query} or earlier \cite{deng2012algorithmic} under different assumptions on the agents' valuations.




\bibliographystyle{plainurl}
\bibliography{consensus}

\begin{thebibliography}{10}

\bibitem{aisenberg20152}
James Aisenberg, Maria~Luisa Bonet, and Sam Buss.
\newblock {2-D {T}ucker is PPA complete}.
\newblock {\em ECCC TR15}, 163, 2015.

\bibitem{alon1987splitting}
Noga Alon.
\newblock {Splitting necklaces}.
\newblock {\em Advances in Mathematics}, 63(3):247--253, 1987.

\bibitem{alon1986borsuk}
Noga Alon and Douglas~B West.
\newblock {The {B}orsuk-{U}lam theorem and bisection of necklaces}.
\newblock {\em Proceedings of the American Mathematical Society},
  98(4):623--628, 1986.

\bibitem{barbanel1996super}
Julius~B Barbanel.
\newblock {Super envy-free cake division and independence of measures}.
\newblock {\em Journal of Mathematical Analysis and Applications},
  197(1):54--60, 1996.

\bibitem{border1989fixed}
Kim~C Border.
\newblock {\em {Fixed point theorems with applications to economics and game
  theory}}.
\newblock Cambridge University Press, 1989.

\bibitem{borsuk1933drei}
Karol Borsuk.
\newblock {Drei S{\"a}tze {\"u}ber die n-dimensionale euklidische Sph{\"a}re}.
\newblock {\em Fundamenta Mathematicae}, 1(20):177--190, 1933.

\bibitem{brams2001competitive}
Steven~J Brams and D~Marc Kilgour.
\newblock {Competitive fair division}.
\newblock {\em Journal of Political Economy}, 109(2):418--443, 2001.

\bibitem{brams1996fair}
Steven~J Brams and Alan~D Taylor.
\newblock {\em {Fair Division: From cake-cutting to dispute resolution}}.
\newblock Cambridge University Press, 1996.

\bibitem{branzei2017query}
Simina Br{\^a}nzei and Noam Nisan.
\newblock {The query complexity of cake cutting}.
\newblock {\em arXiv preprint arXiv:1705.02946}, 2017.

\bibitem{chen2006settling}
Xi~Chen and Xiaotie Deng.
\newblock {Settling the Complexity of Two-Player Nash Equilibrium}.
\newblock In {\em IEEE Annual Symposium on Foundations of Computer Science
  (FOCS)}, page 47th, 2006.

\bibitem{chen2009settling}
Xi~Chen, Xiaotie Deng, and Shang-Hua Teng.
\newblock {Settling the complexity of computing two-player Nash equilibria}.
\newblock {\em Journal of the ACM (JACM)}, 56(3):14, 2009.

\bibitem{chen2013complexity}
Xi~Chen, Dimitris Paparas, and Mihalis Yannakakis.
\newblock {The complexity of non-monotone markets}.
\newblock In {\em Proceedings of the forty-fifth annual ACM symposium on Theory
  of computing}, pages 181--190. ACM, 2013.

\bibitem{daskalakis2009complexity}
Constantinos Daskalakis, Paul~W Goldberg, and Christos~H Papadimitriou.
\newblock {The complexity of computing a Nash equilibrium}.
\newblock {\em SIAM Journal on Computing}, 39(1):195--259, 2009.

\bibitem{deng2012algorithmic}
Xiaotie Deng, Qi~Qi, and Amin Saberi.
\newblock {Algorithmic solutions for envy-free cake cutting}.
\newblock {\em Operations Research}, 60(6):1461--1476, 2012.

\bibitem{fan1967simplicial}
Ky~Fan.
\newblock {Simplicial maps from an orientable n-pseudomanifold into Sm with the
  octahedral triangulation}.
\newblock {\em Journal of Combinatorial Theory}, 2(4):588--602, 1967.

\bibitem{filos2018consensus}
Aris Filos-Ratsikas and Paul~W. Goldberg.
\newblock {Consensus Halving is PPA-Complete}.
\newblock In {\em Proceedings of the 50th Annual ACM Symposium on the Theory of
  Computing (STOC)}. ACM, 2018, \emph{to appear}.

\bibitem{freund1981constructive}
Robert~M Freund and Michael~J Todd.
\newblock {A constructive proof of {T}ucker's combinatorial lemma}.
\newblock {\em Journal of Combinatorial Theory, Series A}, 30(3):321--325,
  1981.

\bibitem{gardner1978aha}
Martin Gardner.
\newblock {\em {Aha! Aha! insight}}, volume~1.
\newblock Scientific American, 1978.

\bibitem{goldberg1985bisection}
Charles~H Goldberg and Douglas~B West.
\newblock {Bisection of circle colorings}.
\newblock {\em SIAM Journal on Algebraic Discrete Methods}, 6(1):93--106, 1985.

\bibitem{haake2002bidding}
Claus-Jochen Haake, Matthias~G Raith, and Francis~Edward Su.
\newblock {Bidding for envy-freeness: A procedural approach to n-player
  fair-division problems}.
\newblock {\em Social Choice and Welfare}, 19(4):723--749, 2002.

\bibitem{hobby1965moment}
Charles~R Hobby and John~R Rice.
\newblock {A moment problem in L1 approximation}.
\newblock {\em Proceedings of the American Mathematical Society},
  16(4):665--670, 1965.

\bibitem{megiddo1991total}
Nimrod Megiddo and Christos~H Papadimitriou.
\newblock {On total functions, existence theorems and computational
  complexity}.
\newblock {\em Theoretical Computer Science}, 81(2):317--324, 1991.

\bibitem{neyman1946theoreme}
Jerzy Neyman.
\newblock {Un theoreme d'existence}.
\newblock {\em C. R. Acad. Sci. Paris Ser. A-B 222}, pages 843--845, 1946.

\bibitem{othman2014complexity}
Abraham Othman, Christos Papadimitriou, and Aviad Rubinstein.
\newblock {The complexity of fairness through equilibrium}.
\newblock In {\em Proceedings of the 15th ACM conference on Economics and
  computation}, pages 209--226. ACM, 2014.

\bibitem{papadimitriou1994complexity}
Christos~H Papadimitriou.
\newblock {On the complexity of the parity argument and other inefficient
  proofs of existence}.
\newblock {\em Journal of Computer and System Sciences}, 48(3):498--532, 1994.

\bibitem{peterson2002four}
Elisha Peterson and Francis~Edward Su.
\newblock {Four-person envy-free chore division}.
\newblock {\em Mathematics Magazine}, 75(2):117--122, 2002.

\bibitem{prescott2005constructive}
Timothy Prescott and Francis~Edward Su.
\newblock {A constructive proof of {K}y {F}an's generalization of {T}ucker's
  lemma}.
\newblock {\em Journal of Combinatorial Theory, Series A}, 111(2):257--265,
  2005.

\bibitem{robertson1998cake}
Jack Robertson and William Webb.
\newblock {Cake-cutting algorithms: Be fair if you can}.
\newblock 1998.

\bibitem{rubinstein2015inapproximability}
Aviad Rubinstein.
\newblock {Inapproximability of Nash equilibrium}.
\newblock In {\em Proceedings of the Forty-Seventh Annual ACM on Symposium on
  Theory of Computing}, pages 409--418. ACM, 2015.

\bibitem{simmons2003consensus}
Forest~W Simmons and Francis~Edward Su.
\newblock {Consensus-halving via theorems of {B}orsuk-{U}lam and {T}ucker}.
\newblock {\em Mathematical social sciences}, 45(1):15--25, 2003.

\bibitem{sperner1928neuer}
Emanuel Sperner.
\newblock {Neuer Beweis f{\"u}r die Invarianz der Dimensionszahl und des
  Gebietes}.
\newblock In {\em Abhandlungen aus dem Mathematischen Seminar der
  Universit{\"a}t Hamburg}, volume~6, pages 265--272. Springer, 1928.

\bibitem{steinhaus1948problem}
Hugo Steinhaus.
\newblock {The problem of fair division}.
\newblock {\em Econometrica}, 16(1), 1948.

\bibitem{su1999rental}
Francis~Edward Su.
\newblock {Rental harmony: {S}perner's lemma in fair division}.
\newblock {\em The American mathematical monthly}, 106(10):930--942, 1999.

\bibitem{tucker1945some}
Albert~William Tucker.
\newblock {Some Topological Properties of Disk and Sphere}.
\newblock {\em Proc. First Canadian Math. Congress, Montreal}, pages
  285–--309, 1945.

\end{thebibliography}

\end{document}